\newtheorem{principle}{Principle}
\newtheorem{definition}{Definition}
\newtheorem{lemma}{Lemma}
\newtheorem{property}{Property}
\begin{document}

\title{Backtracking algorithms for service selection}

\author[1]{Yanik Ngoko\thanks{yanik.ngoko@lipn.univ-paris13.fr}}
\author[1]{Christophe C\'erin\thanks{christophe.cerin@lipn.univ-paris13.fr}}
\author[2]{Alfredo Goldman\thanks{gold@ime.usp.br}}
\author[3]{Dejan Milojicic\thanks{dejan.milojicic@hp.com}}

\affil[1]{ Laboratoire d'Informatique de Paris Nord of the University of Paris 13, France }
\affil[2]{ Institute of Mathematics and Statistics of the University of S\~ao Paulo, Brazil}
\affil[3]{ HP Labs, USA}

\date{ }

\maketitle

\begin{abstract}
In this paper, we explore the automation of  services' compositions. We 
focus on the service selection problem. In the formulation that we consider, the problem's 
inputs are constituted by a behavioral composition whose abstract services must be bound to  
concrete ones. The objective is to find the binding that optimizes the {\it utility} 
of the composition under some services level agreements. We propose a complete solution. 
Firstly, we show that the service selection problem can be mapped onto a 
 Constraint Satisfaction Problem (CSP). The benefit of this mapping is that the large know-how 
in the resolution of the CSP can be used for the service selection problem. Among the existing 
techniques for solving CSP, we consider the backtracking. Our second contribution is to propose various 
backtracking-based algorithms for the service selection problem. The proposed variants are inspired by  
existing heuristics for the CSP. We analyze the runtime gain of our framework over an intuitive resolution 
based on exhaustive search. Our last contribution is an experimental evaluation in which we demonstrate 
that there is an effective gain in using  backtracking instead of some comparable approaches. The 
experiments also show that our proposal can be used for finding in real time, optimal solutions on 
small and medium services' compositions.
\end{abstract}

{\bf keywords: } {Services' Composition; Service Selection; Business Process; Constraint Satisfaction Problem}

\section{Introduction}

One of the most promising ideas developed in Service Computing is the automatic generation  
of services' compositions~\cite{Papazoglou03service-orientedcomputing,SOCValery,Bartalos}. 
Multiple views and problems formulations have been proposed for this purpose~\cite{Rao04asurvey}. 
As a point of differentiation between these works, we consider the {\it functional structure} 
of the composition to built. By this term, we mean, the representation of the functioning of the composition, as a 
collaboration among small abstractions, that can each be realized by known services. Typically, oriented graphs, 
 business processes and workflows have been used for such descriptions~\cite{Cardoso2004281,Weske2007,GoldmanNgoko}. 
Based on the functional structure, we distinguish between two dominant classes of services' compositions problems. 
In the first one, the general inputs of the composition problem consists of: (1) a basis of services whose behavior is described in the public interface; (2) a set of user constraints and goals, defining and framing the finality of the composition. 
One must infer from these data an interaction among services, meeting the constraints 
and goals~\cite{Bartalos,Sirin:2004:HPW:1741306.1741331,Jiang}. We put these formulations in the class of {\it structurally-unfixed}. 
Their particularity is that  the functional structure of the composition is not an input data for the automation problem. 
It is the case in the second class of formulations that we refer to as the one of {\it structurally-fixed} problems. 
The  challenge is reduced to a binding problem, in which concrete implementations must be associated with the 
abstractions of the functional structure such as to guarantee a minimal Quality of Service (QoS), while 
meeting some users level agreements (SLAs)~\cite{Alrifai,BenMokhtar,Yu,ZengMiddleware,Zheng,Ardagna,JISA,cpe3015}. 
This binding problem is also referred to as the service selection problem.

It seems obvious that structurally-unfixed formulations include more 
automation in the design of services' compositions. Indeed, we can decompose the challenge in these cases 
in two: (1) find a functional structure that meets the constraints and goal; (2) solve a structurally-fixed problem. 
However, let us remark that in practice, it is hard to address structurally-unfixed problems 
without providing a formal description of the composition's behavior. The OWL-S language~\cite{Sirin:2004:HPW:1741306.1741331} and pre/post condition formalisms~\cite{Bartalos,Oh05acomparative} are some examples, utilized in this context.
The existence of these additional inputs in practice tempers the high level of automation of structurally-unfixed 
formulations. 

In this work, we consider the service selection problem (structurally-fixed formulation). The functional 
structure in our work is given by a Hierarchical Services Graph (HSG)~\cite{GoldmanNgoko}. This modelling defines 
a  composition as a graph with three layers: the machine, service and operations layers. The service composition 
logic is captured in the operations layer. The logic consists of BPMN~\cite{Weske2007} interactions 
among a set of operations. These operations are abstract in the sense that they can be 
implemented by different services located in the underneath layer. Given such a graph, 
we are interested  in finding the {\it best} implementation for abstract operations while fulfilling the SLAs constraints. 
We restrict the SLAs definition to two QoS dimensions:  the Service Response Time (SRT) and the Energy Consumption (EC). 

Although we use a particular representation of services' compositions, the core problem that we address is not 
new~\cite{Alrifai,BenMokhtar,Zheng,Ardagna,JISA}. But, our study has two main features. Firstly, we are interested in 
finding optimal solutions. This choice has a weakness: the NP-hardness of the service selection problem. However, 
we believe that by making an  {\it intelligent search}, one can provide, within an acceptable runtime,  
exact solutions for {\it small or medium} services' compositions (around $20$ nodes for the service composition). 
It is important to notice that if we consider that the services compositions implement business processes or workflows, 
then there are many practical examples that will correspond to small or medium compositions. One can look 
for instance to the examples provided in~\cite{Omg,Freund}. 
The second feature of our work is that we adopt a view 
of the problem that has not been studied to the best of our knowledge. This is clearly stated by our 
main contribution, which consists of mapping the service selection problem on the Constraint Satisfaction 
Problem (CSP). This mapping opens a 
large potential in the resolution of the service selection problem. As we will see, 
it also captures another facet of service selection: the feasibility problem~\cite{Ardagna,JISA}. 

Among the existing
techniques for solving CSP, we choose to investigate the 
backtracking~\cite{Baker95intelligentbacktracking}. We complete our contribution in proposing various 
backtracking-based algorithms for the service selection problem. The proposed variants are based on the notion of 
reduction order, introduced in our prior work~\cite{GoldmanNgoko} on QoS prediction. They are also inspired by two 
existing heuristics~\footnote{The referred heuristics provide an exact solution; but, their runtime can differ significantly 
from an instance to another} for the CSP:  the max-degree and min-domain heuristics~\cite{Baker95intelligentbacktracking}. 
Finally, we did an experimental evaluation in which we demonstrate the runtime gain of the backtracking-based 
algorithms over two classical solutions for solving the service selection 
problem: exhaustive search and integer linear programming. The experiments also give us interesting insights about 
the size of the compositions on which we can expect results in real-time. 

The remainder of this paper is organized as follows. Section~\ref{Related} presents the related works. 
In Section~\ref{SSP}, we define the service selection problem and connect it to the CSP. Naive and backtracking 
algorithms for the problem are discussed in the sections~\ref{exhaustiveSearch} and~\ref{backtrackingSearch}. 
Section~\ref{ExperimentalEvaluation} gives an experimental evaluation of our work. In Section~\ref{Discussion}, we 
discuss about the potential in the development of service selection algorithms raised by our CSP mapping.  We 
conclude in Section~\ref{Conclusion}.

\section{Related work} \label{Related}

We assumed that services' compositions problems can be structurally-unfixed or structurally-fixed. 
Our work will focus only on the latter case. However, interesting 
proposals for the former case can be found in the work of Seog Chan et al.~\cite{Oh05acomparative}, 
Sirin et al.~\cite{Sirin:2004:HPW:1741306.1741331} or in the survey of Rao and Su~\cite{Rao04asurvey}. A key idea 
that these works share is the usage of AI planning techniques for the resolution of the services' composition problem. 

Concomitantly, the Integer Linear Programming (ILP) is one of the most used techniques employed for tackling the service 
selection problems. One of the pioneer papers with this technique was done by Lee~\cite{Lee}. 
Though it was not its main purpose, his work demonstrated that SLAs can, in practice,  
be formulated as linear equations. This suggests that in many cases, the service selection problem can 
be solved by integer linear programming. The work of Lee focused only on two QoS dimensions: the price and the 
service response time. Similar modellings were proposed including other QoS dimensions like the
price, duration, reputation, reliability and availability~\cite{ZengMiddleware,Yu,Ardagna}. In particular, 
the work of Zeng et al.~\cite{ZengMiddleware} shows that if we consider their natural formulation, 
constraints related to availability will result in non-linear equations. Then, they propose a method for 
transforming such equations into linear ones. In most papers based on linear programming, 
the services' composition is viewed as a collaboration among multiple services within a single 
business process. In practice however, collaborations among multiple processes exist. For these latter 
cases, the works of Ngoko et al.~\cite{JISA,mgc2012} stated how to use linear programming for the 
service selection problem. They focused on energy and service response time minimization.

Many papers established in different contexts the NP-hardness of the service selection problem~\cite{Lee,Yu}. 
This means that exact solutions obtained by Integer Linear Programming (ILP) are computed 
in exponential runtime. 
For obtaining fast results, heuristics have  been considered. 

Yu et al.~\cite{Yu,Yu:2004:SSA:1018413.1019049} proposed a branch and bound algorithm (BBLP) and a dynamic 
programming algorithm  for service selection. These ideas are also discussed in~\cite{Lee}. In both cases the service
 selection problem is reduced to the multichoice 
knapsack problem. The branch and bound algorithm exploits the lagrangian relaxation of the ILP modelling 
for this problem. The dynamic programming solution adapts existing solutions for the multichoice knapsack problem. 
BBLP and dynamic programming  improve in runtime the naive resolution of the service selection problem. 
In this work, we will also propose an exact resolution (as the branch and bound solution of Yu et al.) that 
improves the naive one.
Ben Mokhtar et al.~\cite{BenMokhtar} proposed a two-phases 
heuristic for service selection. In the first phase, the heuristic classifies services regarding 
their {\it utility}. A search approach based on this classification completes the selection process in the second phase. 
The advantage of this approach is to propose near-exact solutions within {\it short} runtime. Ngoko et al.~\cite{JISA} 
and Zeng et al.~\cite{ZengMiddleware} proposed sub-optimal 
approaches and exact algorithms on special cases for the resolution of the service selection problem. As they stated, these 
solutions are efficient only on a small class of services' compositions. Yu et al.~\cite{Yu:2004:SSA:1018413.1019049} 
proposed  {\it improvement heuristics} for finding near optimal 
solutions. The first step of the heuristic consists of finding a feasible solution. 
The solution is then improved by considering other potential combinations of services. As they showed, the proposed 
heuristic can improve the runtime required for {\it BBLP}. Let us however recall that solutions found are not 
necessarily optimal. Alrifai et al.~\cite{Alrifai} also proposed a heuristic for the service selection problem. 
The main idea is to decompose the global service selection problem onto subproblems that can be solved by local 
search. Doing so, they show that they can drastically improve the runtime of the heuristic proposed by Yu et al.~\cite{Yu}. 
Est\'evez-Ayres et al. ~\cite{DBLP:journals/concurrency/Estevez-AyresGBD11} proposed a heuristic 
for finding good approximations for the service selection problem under a runtime boundary. One of the main 
ideas that they describe will be used in our work. It consists of making partial evaluation on a sub-composition of 
services. In our proposal, we go deeper in the idea; we focus on the ordering that can be used for the 
set of partial evaluations to be made. Genetic programming approaches were also 
proposed~\cite{Jaeger,Canfora,Cao,cpe3015} in viewing a services' composition as a string where each character is a service 
operation. Finally, let us remark that as suggested by Cardoso et al.~\cite{Cardoso2004281} and developed in other 
works~\cite{DBLP:journals/concurrency/Estevez-AyresGBD11,Yu:2004:SSA:1018413.1019049}, heuristics for the service selection 
can be obtained from those solving the QoS prediction problem. We will come back on this point later. For now, 
let us point out that many QoS 
prediction algorithms as the SWR algorithm~\cite{Cardoso2004281} or other graph reduction algorithms~\cite{GoldmanNgoko,Zheng2} 
can serve for designing heuristics for the service selection problem.

A large literature exists about the service selection problem. Existing works established connections between the service selection problem 
and the SAT problem~\cite{Oh05acomparative}, the multichoice Knapsack problem~\cite{Lee}, the multidimension multichoice Knapsack problem and the 
multiconstrained optimal path problem~\cite{Yu:2004:SSA:1018413.1019049}. However, we did not find any work that proposes to study this problem 
as a variant of the CSP. The solution that we propose is inspired by a CSP view of service selection.  It also has some similarities with the work 
done by Est\'evez et al.~\cite{DBLP:journals/concurrency/Estevez-AyresGBD11} the branch and bound algorithm of Yu et al.~\cite{Yu} and 
our prior work~\cite{JISA}. For circumscribing our originality, let us also notice that our prior work was based on mixed integer linear 
programming. Though interesting, this solution is tuned to a particular QoS modelling. For instance, it does not work with 
the probabilistic modelling of Hwang et al.~\cite{Hwang20075484}. As Est\'evez et al.~\cite{DBLP:journals/concurrency/Estevez-AyresGBD11}, we 
propose to improve the runtime of the service selection algorithm by reducing the set of candidate solutions that are explored. We differ 
from their work by  the usage of the backtracking technique. 
Similarly to Yu et al.~\cite{Yu}, we adopt a branching technique for reducing the set of candidate solutions 
during the resolution. But instead of using the lagrangian relaxation, we propose a novel estimation.
Finally, let us observe that while most work focused on heuristics approaches for the service selection problem, we 
are interested in exact resolution. The NP-hardness of the service selection problem is the main weakness of this 
choice. However, we believe that in using an appropriate search algorithm, one can obtain in a short 
runtime optimal solutions for {\it small and medium services'  compositions}. Our work will demonstrate this point in considering 
services' compositions implementing well-known business processes. Moreover our proposal can serve as a solid basis for the 
development of approximated solutions on the service selection problem.

\section{The service selection problem} \label{SSP}

\subsection{Structure of a services' composition}

We model a services' composition as a Hierarchical Services Graph (HSG)~\cite{GoldmanNgoko,mgc2012,JISA}. 
In this representation, a services' composition is a three-layers graph, each (of the layers) encapsulating a particular abstraction of a service 
composition; these are the business processes, the services and the physical machines layers. 
An example of such a graph is given in Figure~\ref{ExampleHSG}. 
\begin{figure}[htbp]
\centering
\fbox{
\includegraphics[width=0.68\linewidth,height=2.1in]{./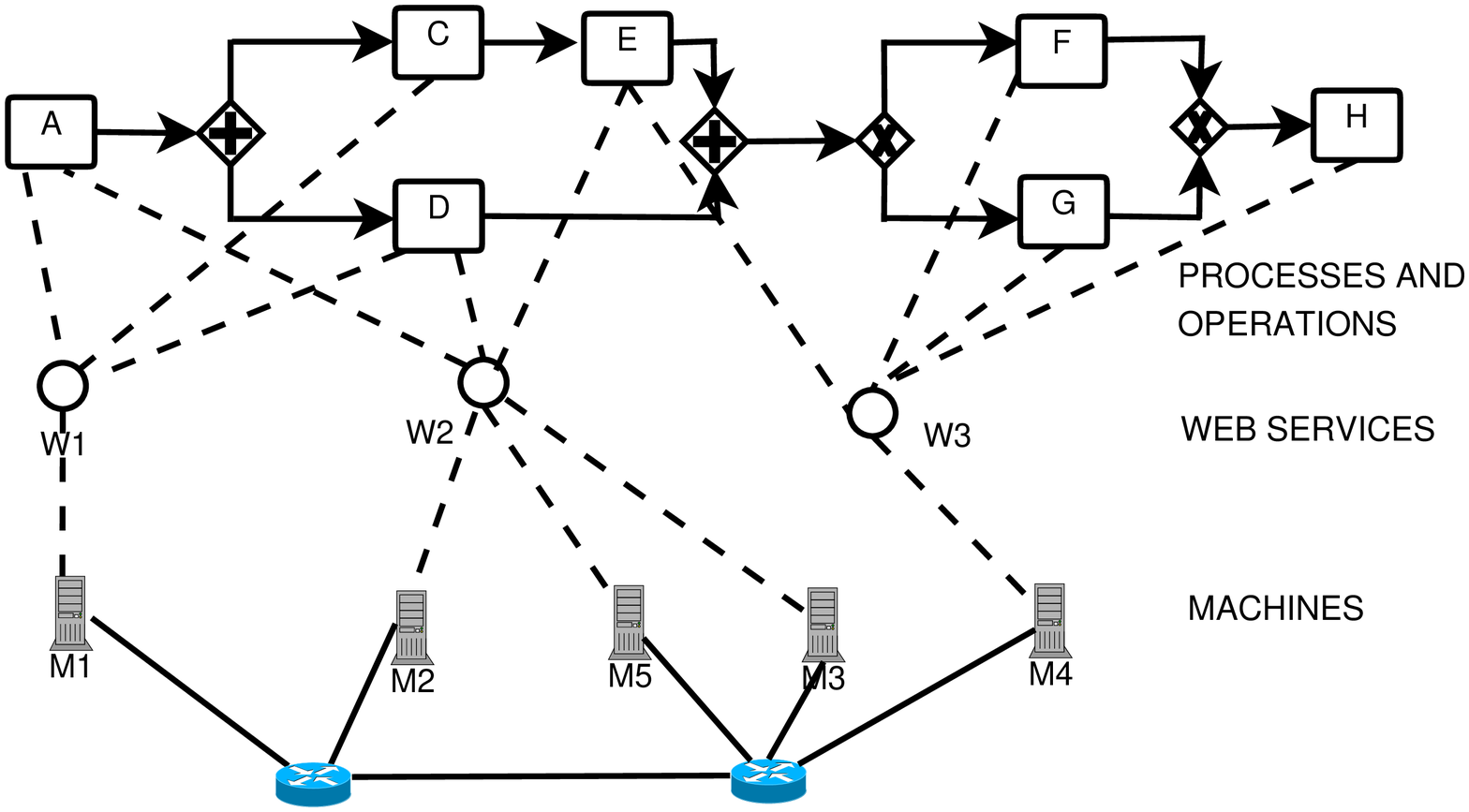}}
\caption{Example of Hierarchical Services Graph}\label{ExampleHSG}
\end{figure}

\begin{figure}[htbp]
\centering
\fbox{
\includegraphics[width=0.64\linewidth,height=2.4in]{./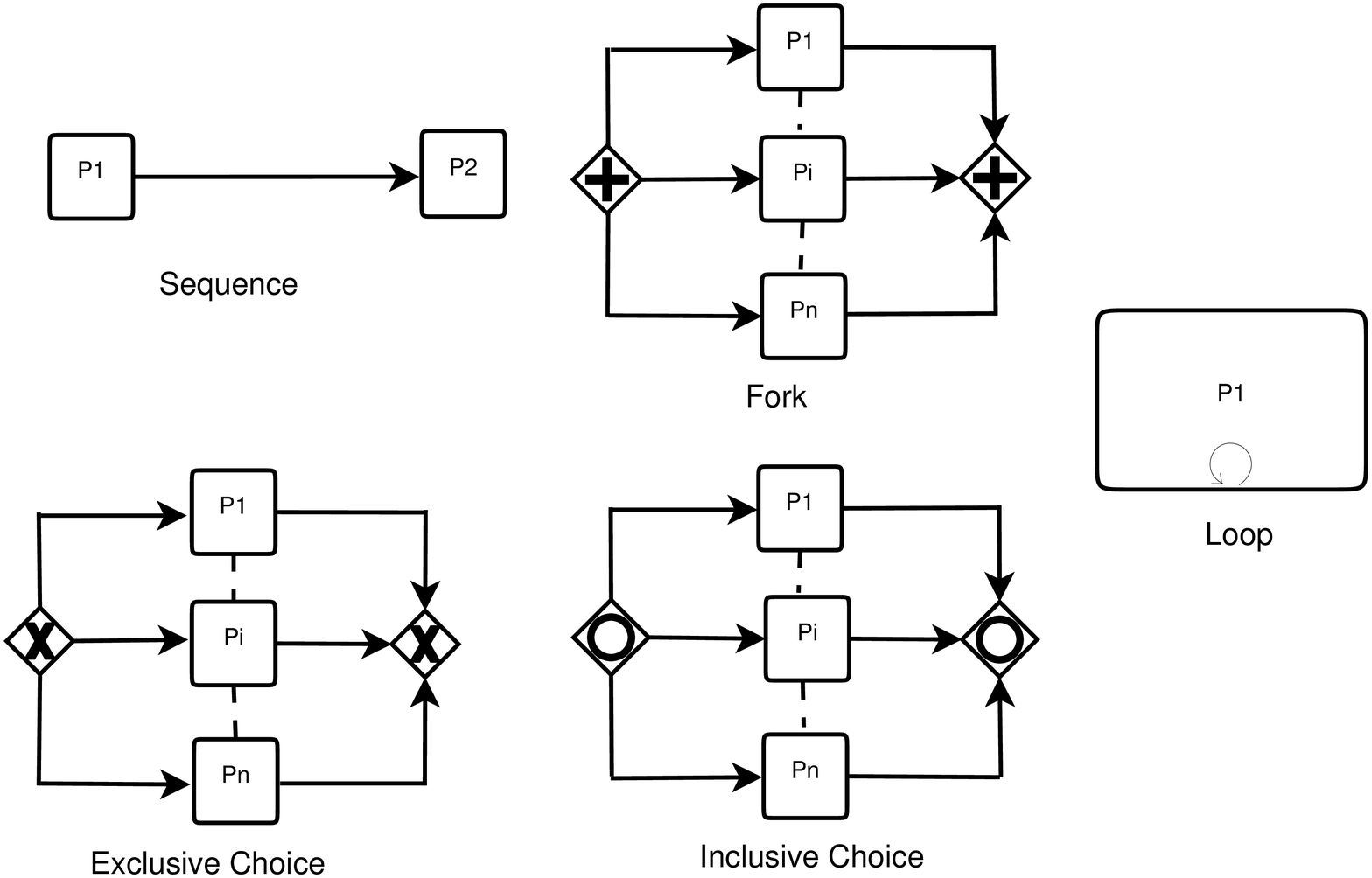}}
\caption{Set of generic subgraph patterns in the operations layer of a HSG. Each $G_i$ 
is again a subgraph obtained from these patterns or an operation}\label{flows}
\end{figure}

A HSG comprises three layers organized as follows. The first layer, which we will also refer to as the \textbf{operations graph}, 
describes the functioning of the services' compositions as a business process interaction among abstract operations. 
For the sake of simplicity, we will reduce these interactions to the ones obtained by composing the subgraphs patterns of Figure~\ref{flows}. 
Abstract operations are implemented in the services that are in the layer underneath. Finally the last layer states the machines where 
services are deployed. 

An operation can be implemented in multiple services. This means that given an abstract operation, we have at the services 
layer, many concrete operations that can implement its behavior. A same service can be deployed on various machines. This accounts 
for capturing the possible migrations that can occur during the execution of a services' composition. However, for the sake of simplicity, 
we will assume that each service is deployed on a unique machine. 
Finally, let us remark that the HSG modelling we consider corresponds to the implementation of a single business process. 
Representations for multiple business processes also exist~\cite{mgc2012,JISA}; but, this is out of the scope of our 
study.

The service selection problem is based on predefined relationships between abstract operations and services. 
The problem consists in choosing the best concrete operations for abstract ones such as 
to minimize the service response time and the energy consumption. Below, we provide a formal definition. 

\subsection{Problem formulation}

Here, we consider the formulation introduced in~\cite{JISA,cpe3015}

{\bf Problem's inputs: }
Let us consider a HSG whose set of abstract operations is $O$.  For each operation $u \in O$, 
there is a set of concrete implementations $Co(u) = \{u_1, \dots, u_{m_u}\}$. For each concrete 
implementation $u_v$, we have the mean response time $S(u_v)$ and the energy consumption $E(u_v)$. 
Finally, we assume two upper bounds that are issued from SLAs constraints: the bound $MaxS$ for 
the service response time and $MaxE$ for energy consumption. Finally, we have a tuning 
parameter $\lambda \in [0,1]$ used  for giving more priority to the SRT or the EC in the problem 
optimization goal.\\

{\bf Problem objective: }
We are looking for an assignment of concrete operations for $O$ that fulfills the following constraints:
\begin{enumerate}
\item[$C_1$:] each operation must be associated with a unique concrete implementation;
\item[$C_2$:] the QoS of the resulting composition must not exceed $MaxS$ in response time and  $MaxE$ in energy consumption;
\item[$C_3$:] if $S$ is the service response time of the resulting composition and $E$ its energy consumption, then 
the assignment must minimize the global penalty $\lambda S + (1-\lambda)E$. 
\end{enumerate}

In this formulation, the constraint $C_2$ defines users' SLAs for the response time and 
energy consumption. The composition has a penalty defined by the constraint $C_3$. 
For completing this formulation, it is important to explain how $S$ and $E$ are computed given 
a binding of abstract services to concrete ones. We address this issue in the following subsection by associating an 
execution semantics to HSGs. 

\subsection{Execution semantics}

We divide the semantics in two parts. The first one states how we represent the QoS of a concrete 
operation; the second one determines how we compute the QoS of a request that {\it traverses} multiple 
concrete operations.

\subsubsection{QoS of a concrete operation.}

We use a deterministic modelling for QoS operation. The mean of the QoS of a concrete 
operation in a given dimension (response time, energy consumption) is expressed as a real value. 
Though criticizable, this modelling has been considered in multiple other works~\cite{Cardoso2004281,ZengMiddleware,GoldmanNgoko}. 
Moreover, the conclusion of our current work can be extended to other modellings like the  
probabilistic one  of Hwang et al.~\cite{Hwang20075484}. Given the QoS of each concrete operation, we will now show how to 
aggregate them in order to capture the mean QoS of a request that is processed in a HSG.

\subsubsection{QoS of a subgraph.}

We compute the QoS of a service composition depending on the structure of the operations 
graph (upper layer graph of the HSG). The idea is to aggregate the operation QoS, in considering all possible execution cases 
of a request processed in a HSG. The aggregation rules are depicted in Table~\ref{tabAggRules}. 
They state how to compute the response time and the energy consumption,  
expected from a request that is processed in a HSG whose structure 
is matched to the patterns of Figure~\ref{flows}.
$E(P)$ refers to the energy consumption of a request processed in the subgraph $P$. 
$S(P)$ is its response time. In {\it exclusive choice},  $p_i$ gives the probability for a request to be routed towards the subgraph $P_i$.  
For the sake of simplification in {\it inclusive choice} (see Figure~\ref{flows}), we  assume  that the request can only be routed towards the subgraph 
$P_1$, $P_2$ or simultaneously, to both. Each routing occurrence, has  
a known probability $p_{or1}$, $p_{or2}$ and $p_{or||}$. We added this restriction on inclusive choices for the sake of 
simplicity. Our solutions can however be generalized to the case where we have more routing occurrences. 
Finally, for any loop subgraph, we assume that we have 
the mean number $m$ of times in which the requests loop on it. 
\begin{table}[htbp]
\centering
\begin{tabular}{|p{4cm}|p{4cm}|p{2cm}|} 
\hline
\small \textbf{Sequence}   & \small \textbf{Fork}  & \small \textbf{Loop} \\\hline
$S(P_1) + S(P_2)$ & $\max\{ S(P_1),\dots S(P_n) \}$ & $m.S(P_1)$ \\
$E(P_1) + E(P_2)$ & $ E(P_1)+ \dots+ E(P_n)$ & $ m.E(P_1)$ \\\hline
\small \textbf{Exclusive choice} &  \multicolumn{2}{c|}{\small \textbf{Inclusive choice}} \\\hline
 $\sum_{i=1}^n p_i.T(P_i)$  & \multicolumn{2}{c|}{$p_{or1}.T(P_1)+ p_{or2}.T(P_2)$  $ + p_{or||}.\max \{T(P_1), T(P_2)\}$}  \\
 $\sum_{i=1}^n p_i.E(P_i)$  & \multicolumn{2}{c|}{$p_{or1}.E(P_1)+ p_{or2}.E(P_2) +  p_{or||}.(E(P_1)+ E(P_2))$}  \\\hline
\end{tabular}
\caption{Aggregation rules on subgraphs patterns}\label{tabAggRules}
\end{table}
\normalsize

In these formula, we have almost the same aggregation rules for energy consumption and 
response time. The difference between these two dimensions is on how we interpret the parallelism: 
from an energy viewpoint, all paths of execution, {\it even parallel}, will induce an energy consumption. 
For additional explanations about these formula, we refer the reader to~\cite{JISA}. 

From the Lee result~\cite{Lee}, it is easy to establish that the described service selection 
problem under our execution semantics can be reduced to a multi-choice knapsack problem; this  
proves its NP-hardness. Below, we will show that we can use the constraint satisfaction problem 
for solving the service selection problem.

\subsection{Service selection as a constraint satisfaction problem} \label{Decomposition}
The Constraint Satisfaction Problem (CSP) is a classical problem in artificial intelligence and combinatorial optimization. 
A CSP is defined as a tuple $(V, D, C)$
where:
\begin{itemize}
\item $V = \{ v_1,\dots, v_n \}$ is a set of variables;
\item $D = \{ D(v_1),\dots, D(v_n)\}$ is the set of variables' domains;
\item $C = \{C_1,\dots, C_m\}$ is the set of constraints; each $C_i$ imposes a restriction on the possible 
values that can be assigned to a subset of variables; 
\item There are two classical objectives in this problem. In the {\it one-solution} objective, we are 
looking for an assignment of values to variables that does not violate any constraint. In the {\it all-solutions}  
objective, we are looking for all assignments that do not violate the constraints. 
\end{itemize}

Regarding the objective function, we will show that the  {\it one-solution} objective captures the 
feasibility problem in service selection~\cite{Ardagna,JISA} while the {\it all-solutions} objective captures the 
service selection problem. We recall that in the feasibility problem problem, the interest is in 
finding an assignment of concrete services that meet the SLAs.  

Firstly, let us assume the {\it all-solutions} objective. 
Let us also assume that we have to solve the service selection problem for an arbitrary HSG. We propose to map 
the problem onto a  CSP throughout the following rules:
\begin{enumerate}
\item we consider that variables correspond to the operations of the HSG;
\item the domain of a variable is the set of possible concrete operations that 
implements the abstract one to which it refers to;
\item the constraints of the problem are the SLAs of the service selection problem. This means that 
we are looking for all assignments $f \subseteq  D(v_1)\times \dots \times D(v_n)$, such 
that $E(f) \leq MaxE$ and $S(f) \leq MaxS$; here $E(f)$ and $S(f)$ are the 
energy consumption and the response time.
\end{enumerate}

The resolution of this problem will return all candidate solutions for the service selection problem. 
Let us suppose that it gives us $\omega$ assignments $f_0, \dots, f_{\omega -1}$. For solving the service 
selection problem, we select the assignment $f_{opt}$ such that 
$\displaystyle opt = \arg \{ \min_{0\leq u \leq \omega-1} \lambda.S(f_{u}) + (1-\lambda).E(f_u) \}$. 

CSPs are often classified according to the constraints formulation~\cite{Baker95intelligentbacktracking}. 
In binary CSPs for instance, each constraint is defined as a set of pair values that cannot be associated 
with two specific variables; this can be generalized to $k$-ary CSPs where constraints are defined as 
tuples of $k$ values that cannot be associated with $k$ distinct variables. In nonlinear CSPs, constraints are 
formulated as nonlinear inequalities on variables. With the proposed mapping, 
the service selection problem is a nonlinear-like CSP. 

It is straightforward to notice that in 
applying the given mapping with the {\it one-solution} objective, we obtain the feasibility problem 
in service selection. 
The proposed mapping can be extended to many other formulations of the service selection problem. 
For instance one can include in it other SLAs constraints on reputation, price, or availability. 
One of its main benefits is to suggest that the service selection problem can be solved by adopting CSPs 
algorithms. For this, we need to provide an {\it evaluation algorithm} that states how given an 
assignment $f_u$, we compute $E(f_u)$ and $S(f_u)$. 
In the following text, we describe this algorithm and a first solution for the service selection problem

\section{Evaluation algorithm and exhaustive search} \label{exhaustiveSearch}

\subsection{Evaluation algorithm} \label{evaluationAlgorithm}

We propose to use our prior QoS prediction algorithm~\cite{GoldmanNgoko}. 
We will recall below some key points of the evaluation algorithm.

In the algorithm proposed in~\cite{GoldmanNgoko}, the QoS are computed with the graph reduction technique. 
We consider as input a HSG whose operations graph that is obtained by composing the patterns of Figure~\ref{flows}. We 
will say that such a graph is {\it decomposable}. The algorithm will successively seek in the operations graph,  
subgraphs whose structure is defined as in Figure~\ref{flows}, but with the $P_i$s, corresponding 
here to operations. We will use the term \textit{elementary subgraphs} for qualifying them. 
As soon as an elementary subgraph is found, it is reduced. This means that its QoS 
are computed and the subgraph is replaced by a single node with the same QoS. Then, the execution 
continues until the reduction of the operations graph reaches a single node.

For optimizing the algorithm's runtime, a reduction order is computed at the beginning. The reduction order is a stack of 
subgraphs such that: (1) the top subgraph is elementary, (2) as soon as the top subgraph is reduced, 
the new one in the  top is also elementary. The reduction is done 
according to this order. Goldman and Ngoko~\cite{GoldmanNgoko} showed that elementary subgraphs 
can be characterized by two frontier-nodes: a root and a leaf one. This fact eases the 
subgraphs' representation  in the reduction order. 

In Figure~\ref{Reduction-w-order}, an illustration of the reduction process is provided. 
Initially, we have the graph of Figure~\ref{Reduction-w-order}(1). The first phase of the 
algorithm will generate the reduction order (or the reduction stack) for the graph. We 
represent it as a stack in which subgraphs are given by a root and a leaf node. The second phase begins with 
the unstacking of the top element in the order and then the 
reduction of the corresponding elementary subgraph. This leads to the graph of 
Figure~\ref{Reduction-w-order}(2). The algorithm continues in the same way until 
the reduction stack is empty. At this step, the operations graph will be reduced to a unique node. 
\begin{figure}[htbp]
\centering
\fbox{
\includegraphics[width=0.8\linewidth,height=1.7in]{./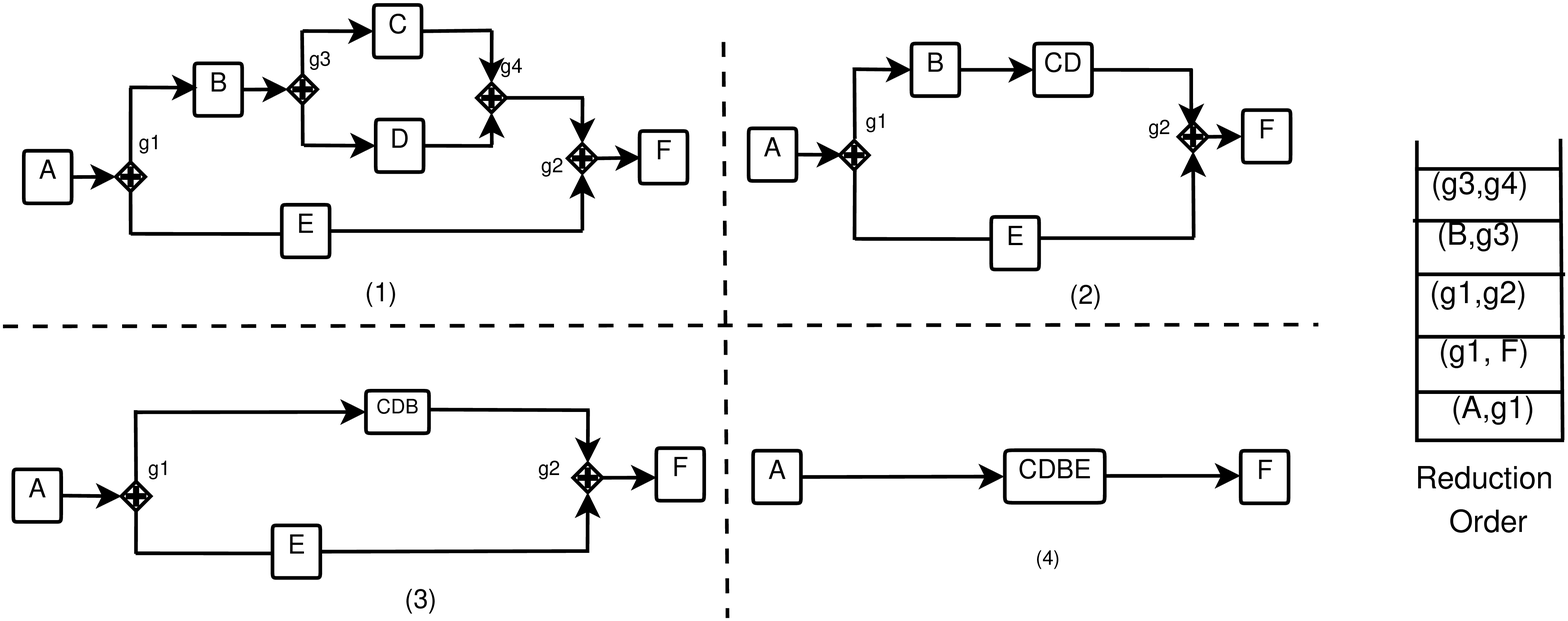}}
\caption{Example of graph reduction.}\label{Reduction-w-order}
\end{figure}

As stated in the 
introduction, our resolution of the service selection problem will be based on the notion of reduction order.
We will recall here some important details about its representation. 

The reduction order is made of pairs $(x,y)$ that each defines a subgraph to reduce. Regarding the composition of 
each pair, four cases can be distinguished: a) $x$ and $y$ are operations;  
in this case, the referred reduction is an elementary sequence with $x$ and $y$; in Figure~\ref{Reduction-w-order}(2) 
for example, we have the reduction $(B, CD)$; b) in the second case, $x$ 
is a split connector and $y$ is a join connector (for instance  $(g_3, g_4)$ in Figure~\ref{Reduction-w-order}(1));  
then, the reduction refers to the elementary split/join subgraphs; c) in the third case, $x$ is a split connector 
and $y$ is an operation; then, the reduction 
refers to the subgraphs whose root is $x$ and leaf is $y$; in Figure~\ref{Reduction-w-order}(1), we have the reduction $(g_1, F)$ that
refers to the subgraph comprising $g_1, B, g_3, C, D, g_4, E, g_2, F$; d) in the last case, $x$ is an operation and 
$y$ a split connector; then, the reduction refers to the subgraph whose root is $x$ and leaf is {\it the leaf of $y$}. 
For instance, $(B, g_3)$ comprises $B, g_3, C, D, g_4$.

Now that the evaluation algorithm has been detailed, we can derive a service selection algorithm 
by stating how we solve the CSP. One can envision at this stage to use a generic CSP solver. 
But, let us observe that in our CSP mapping (See Section~\ref{Decomposition}), we cannot easily map the 
description of the SLAs constraints ($E(f_0) \leq  MaxE$ and 
$S(f_0) \leq  MaxS$) to classical options used in CSP solver (e.g set of unauthorized values, linear equations).  
This is why in the sequel, we will consider a proper resolution.

\subsection{Exhaustive search algorithm}

We propose to consider the exhaustive search algorithm for the CSP~\cite{Baker95intelligentbacktracking,DBLP:journals/concurrency/Estevez-AyresGBD11}. Given a CSP $(V, D, C)$,  the principle of this algorithm is to randomly 
generate assignments of values taken in $D$ to the variables $V$. Each time that an assignment $f$ is generated, 
one evaluates whether or not it fulfills the constraints $C$. If it is the case, we return $f$ as 
solution; otherwise, we generate another assignment. 
In using this algorithm in our mapping of the service selection problem (see Section~\ref{Decomposition}), 
we obtain the Algorithm~\ref{alg:Exhaustive} for the service selection problem. 

The proposed scheme is based on the following notations:
\begin{itemize}
\item $|f|$ is the number of abstract services.
\item With each abstract operation, we associated a distinct integer number (the abstract services have the numbers:  
$0, \dots, |f|-1$). 
\item For defining assignments, we use the array $f$. $f[i]$ will denote the concrete operation  
associated with the abstract operation $i$.
\item $E(f)$ and $S(f)$ are the energy and the service response time of the partial or 
concrete assignment made in $f$.
\item The over-defined notion $Co(index)$ refers to the set of concrete services that can be assigned to 
the abstract operation $index$.
\end{itemize}

Though we deduced the exhaustive search algorithm from the know-how in CSP resolution, let us observe 
that the proposal can be found in other work~\cite{DBLP:journals/concurrency/Estevez-AyresGBD11,Yu:2004:SSA:1018413.1019049}. 
The main difference between our solution and theirs is the evaluation algorithm. 

The exhaustive search proposes an exact solution for the service selection problem. However, considering 
the way we solve the CSP, this solution is not necessarily the best for the runtime viewpoint. 
In what follows, we propose a faster algorithm that includes more in depth the know-how in CSP's resolution.

\scriptsize
\begin{algorithm}[H]                    
\begin{algorithmic}[1]
\scriptsize
\Function{Main}{}
        \State $OptPenalty = +\infty$;  $index = 0$;
	\State Create an uninitialized array $f$ of values for abstract operations;
	\State Call exhaustive($f$, $H$, $OptPenalty$, $index$);
	\State Return the best assignment and $OptPenalty$;
\EndFunction

\Function{exhaustive}{$f$, $H$, $OptPenalty$, $index$}
\If{$index = |f|$}
	\State Compute $E(f)$ and $S(f)$ from the evaluation algorithm with $H$ and $Q$;
	\If{ $S(f) \leq MaxS$ and $E(f) \leq MaxE$ }
		\If{$\lambda.S(f) + (1-\lambda).E(f) < OptPenalty$}
			\State Save $f$ as the best assignment;
			\State $OptPenalty = \lambda.S(f) + (1-\lambda).E(f)$;
		\EndIf
	\EndIf
	\State Return;
\EndIf
\For{ all concrete operations $u \in Co(index)$}
\State $f[index] =$ $u$;
\State Call exhaustive($f$, $H$, $OptPenalty$,  $index+1$);
\EndFor
\EndFunction

\normalsize
\end{algorithmic}
\caption{\scriptsize SS-Exh (Exhaustive search for service selection). \\ {\bf INPUT:} a HSG $H$ and a QoS matrix $Q$ giving the energy consumption and 
service response time of each concrete operation;  \\ {\bf OUTPUT:} An assignment of concrete operations to abstract ones }
\label{alg:Exhaustive}  
\end{algorithm}
\normalsize

\section{A Backtracking search for the Service Selection Problem} \label{backtrackingSearch}

Our objective is to improve the runtime exhaustive search algorithm. Our conviction is that 
this algorithm has an amount of {\it useless work} that can be avoided. This section is 
organized in two parts. In the first one, we discuss about useless work in the exhaustive search. Then, we 
propose an algorithm for avoiding them. 

\subsection{Useless work in exhaustive search}

Our prior work~\cite{JISA} highlights a critical situation that happens in the service selection 
problem: {\it the infeasibility problem}. Indeed, given a HSG $H$, it might be impossible to respect the 
constraints defined in the SLAs  because for a sub-HSG $H' \subset H$, the service selection 
problem does not have any solution. As illustration, let us consider the service selection problem with the 
operations graph of Figure~\ref{noSolution}. In the SLAs constraints, if it is set that the service response time must 
be lower than $13$ms, then the infeasibility of the problem can be established in considering the 
possible assignments for the subgraph $H' = (g_1, g_2)$. 
The exhaustive search here is not always optimal regarding the amount of work. Indeed,  
a better search would have consisted of exploring the possible assignments that can be made for the 
abstract operations in $H'$ and then checking each time whether or not these assignments respect the SLAs.  
Doing so, the infeasibility  could have been established in exploring only a part of the 
search space. 

\begin{figure}[htbp]
\centering
\fbox{
\includegraphics[width=0.9\linewidth,height=1.7in]{./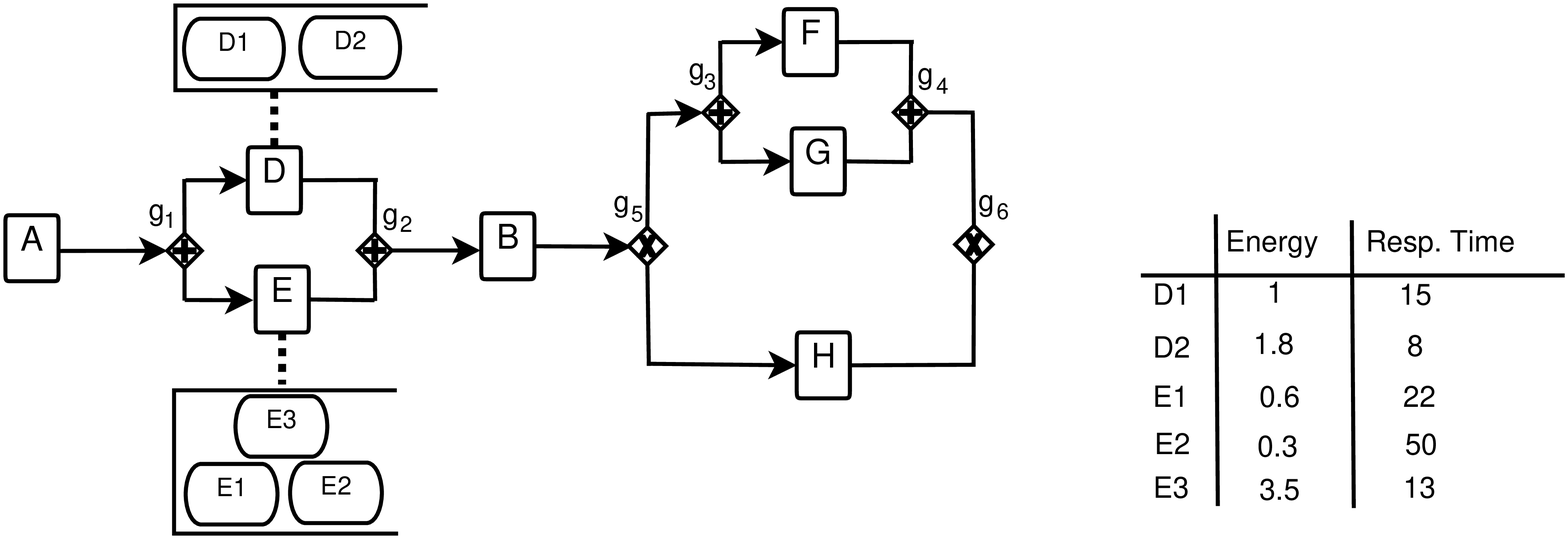}}
\caption{Example of operations graph with related concrete operations. 
$D$ is implemented by $D_1$ and $D_2$ and $E$ is implemented by $E_1, E_2, E_3$}\label{noSolution}
\end{figure}

The second instance of useless work  is similar to the first one. We suppose 
now that the problem is feasible; but, there are multiple assignments that do not respect the 
SLAs. If the constraints violation were already identified in a sub-HSG $H' \subset H$, a part of the 
useless work could have been avoided. As illustration, if we consider in 
Figure~\ref{noSolution} that the response time must be lower or equal to $14$ms, then there is 
only one assignment to the subgraph $(g_1, g_2)$ that can lead to a feasible solution. 
It is only this assignment that must be joined with the other possibilities for $A$ and $B$.

The last situation of useless work is related to the quality of a partial assignment. 
Let us assume that we already have a correct assignment whose penalty is $p$. It might happen in the 
exhaustive search that we made an assignment to a sub-HSG $H'$ whose total penalty already exceed the value of $p$. 
In these cases, we must not try to {\it complete} this assignment (to all operations of $H$) since we already have a 
better one. Let us observe that this analysis is often done in branch and bound algorithms. In the 
context of service selection, a discussion can also be found in the work of Yu et al.~\cite{Yu}.

Summarizing, we have useless work in the exhaustive search if we can find a sub-HSG for which 
multiple assignments are infeasible or already dominated by an existing solution. 
For optimizing these situations we propose to use a backtracking search that we discuss in the following. 

\subsection{Backtracking algorithms}

We consider the CSP resolution with the backtracking technique applied with a static initial ordering. 
Given a CSP tuple $(V, D, C)$, the technique starts by defining an ordering of the 
variables $V$. Let us assume that the resulting order is $v_1, \dots, v_n$. Then, one successively 
assigns values to the variables according to the ordering and their domain definition. In the processing, 
we can reach a situation where values are assigned to  
$v_1, \dots, v_i$. Then, one checks if no constraint is violated by this partial assignment and if the 
assignment is not already dominated by another one. 
If it is not the case, one assigns a value to $v_{i+1}$. Otherwise, one will assign another 
value to $v_i$ (one backtracks). 

The backtracking technique might reduce the useless work that we have identified before. In 
exhaustive search, we evaluate all possible assignments to the variables. 
In backtracking, this is not the case. If for instance, there is no assignment for $v_1$ that 
satisfies the constraints, then backtracking will consider only $|D(v_1)|$ assignments instead of 
$|D(v_1)|\times \dots \times |D(v_n)|$ for exhaustive search. 

To demonstrate the gain expected from  backtracking, we illustrate in Figure~\ref{backvsExh} the search spaces 
that we explore. This is the case where, for the graph of Figure~\ref{noSolution}, an SLA constraint states that the service response time must be lower than $13$ms. 

\begin{figure}[htbp]
\centering
\fbox{
\includegraphics[width=0.9\linewidth,height=1.7in]{./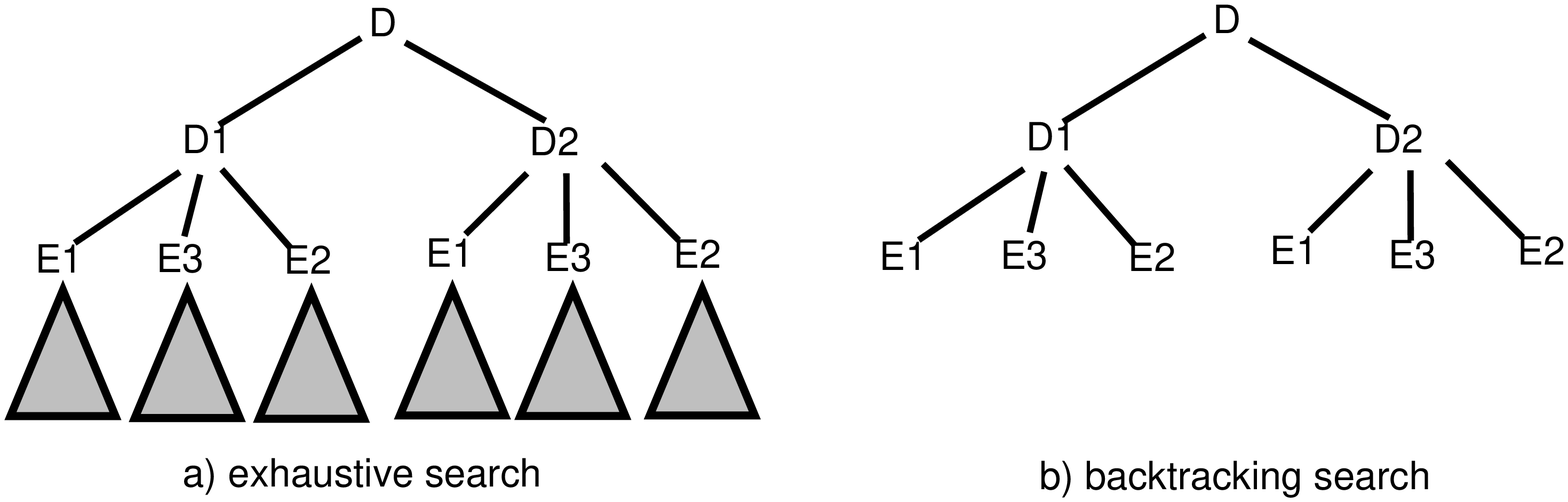}}
\caption{Exhaustive search (in a) vs backtracking search (in b). The dark sub-trees correspond to 
assignments made to $A, B, F, G, H$. With backtracking, these assignments will not be explored.}\label{backvsExh}
\end{figure}

For applying the backtracking technique, we will now discuss two points. Firstly, we present 
the basic principles that we use for ordering variables. Then we discuss the implementations 
of the principles in a backtracking algorithm for service selection.

\subsubsection{Ordering principles}

The ordering of variables is important in backtracking. The literature~\cite{Baker95intelligentbacktracking} proposes 
multiple static orderings. We propose to consider two of the most popular: the min-domain and the 
max-degree ordering. In the  min-domain ordering, the abstract operations whose concrete set of 
operations are smaller must be considered first in the assignments. In the max-degree ordering, the abstract operations 
that are the most connected to the other operations will be considered first. We map these orderings in the 
resolution of the service selection 
problem by the mean of two principles that we introduce further. Let us first consider the following definitions.  

\begin{definition}[Correct partial evaluation]
Given a decomposable HSG $H$ whose abstract operations are bound to concrete one, we define a correct partial evaluation 
of QoS as the vector of QoS values (energy consumption, response time) of a decomposable subgraph of $H$.  
\end{definition}

For instance, given the operations graph of Figure~\ref{noSolution}, if $D$ is assigned to $D_1$ and $E$ to $E_1$, then a 
correct partial evaluation is the vector $(22, 1.6)$ for the response time and energy consumption of 
the subgraph $(g_1, g_2)$. We only compute partial evaluations on decomposable graphs. Let us recall that 
such graph are obtained by composing the regular patterns that we consider in the semantics of the operations 
graph. The objective in making partial evaluations is twice: (1) the partial evaluations bounds are compared 
to  the SLAs bounds to check if there is a violation; (2) these bounds are compared to the local optimum found for 
the service selection problem to see if it is not already dominated. 

It is important to notice that in the comparisons, the partial evaluations bounds must in some cases be weighed. 
In  Figure~\ref{noSolution}, let us consider the QoS vector (SRT, EC) issued from the reduction of $(g_3, g_4)$. 
Since this subgraph is included in the xor subgraph $(g_5, g_6)$, we cannot directly compare the response time 
of this vector to the SLA bound on response time. For taking into account the semantics, we need to multiply this value 
with the probability for a request to be routed towards $g_3$. In our prior work~\cite{mgc2012}, for such situations, 
we introduced the notion of {\it reachability probabilities}. In a simplified manner, for each subgraph, this gives 
the probability for a request to be routed to it. We will use these probabilities for weighting the comparison of QoS 
vectors with SLAs bounds.

The correct partial evaluations capture a facet of our backtracking algorithms: we will regularly evaluate sub-assignments 
of concrete services to abstract ones such as to detect whether or not we must continue in this exploration 
path. As stated before, the order of assignments of concrete services to abstract ones can have a great 
influence in the run. We introduce below the partial evaluation precision for characterizing the possible orderings. 

\begin{definition}[Partial Evaluation Precision]
Given a decomposable HSG $H$ and an ordered list of abstract operations $L = [ u_1, \dots, u_k ]$, 
we define the precision of the partial evaluation of $L$ as the difference $\displaystyle pep(L) = |L| - neo(L)$. 
In this formula, $neo(L)$ is the maximal number of $L$' operations from which one can compute a correct 
partial evaluation of QoS from $H$. 
\end{definition}

This definition relies on the fact that any partial assignment will not lead to a lower bound 
on SRT and EC that includes all abstract operations. At the beginning of the backtracking algorithm, we must define 
an ordered list $L$ of abstract operations. This list is such that $L[1]$ is the first abstract operation that will be 
assigned, then follows  $L[2]$ and so one. At a moment in the algorithm, we could have assigned a concrete 
operation to $L[1], L[2], \dots, L[i]$; but it might happen that the assigned operations do not 
constitute a decomposable graph (See Section~\ref{evaluationAlgorithm} for decomposable graph). 
In these cases, a correct partial evaluation will be obtained only 
from a subset of the assignments ($neo(L)$). In Figure~\ref{noSolution} for instance if $L = [D, A]$ 
then $pep(L) = 2$. If $L = [D, E]$, then $pep(L) = 0$. Indeed, $[D, A]$ do not {\it shape} a decomposable 
graph. 
We can generalize the notion of precision.
Let us assume that $L[1..i]$ defines the sublist having the first $i$ elements of $L$.  

\begin{definition}[Total Evaluation Precision]
Given a decomposable HSG $H$ and an ordered list $L = [ u_1, \dots, u_k ]$ of its abstract operations, 
we define the precision of the total evaluation of $L$ as $\displaystyle \sigma(L) = \sum_{i = 2}^{|L|-1} pep(L[1..i])$.
\end{definition}
$\sigma(L)$ captures the distance between two numbers of operations: those to 
which a concrete service is assigned and those from which a correct evaluation can be made. The following 
result is straightforward.

\begin{property}
Let us consider a decomposable HSG for which all abstract operations are assigned to 
concrete ones. Let us also consider an ordered list $L = [ u_1, \dots, u_k ]$ of its abstract operations. Then,  
$pep(L[1..i]) \geq 0$,  $pep(L[1..k]) = 0$ and  $\displaystyle \sigma(L) \geq 0$. 
\end{property}

Based on these definitions, we can then define the first principle that we will use for 
ordering abstract operations.

\begin{principle}[Partial Evaluation of QoS First (PEQF)]
Let us consider a decomposable HSG $H$ for which all abstract operations are assigned to 
concrete ones. The generated ordering list $L$ for $H$ must minimize the precision of the total  
evaluation of $L$.
\end{principle}

With this principle, our objective is to maintain, each time during the search, an updated correct 
partial evaluation that we can use for checking whether or not SLAs are violated. As one can remark, this 
is not the case with large values of $\sigma(L)$. 
In Figure~\ref{noSolution} for instance, with the ordering $L = [B, A, E, D]$, we have $\displaystyle \sigma(L) = 5$ ( 
$[B, A, E]$ do not describe a decomposable graph). In this case, the backtracking will 
not improve the exhaustive search. This is because we must wait for the assignment of a concrete operations to 
each abstract one for expecting a QoS evaluation. In choosing however 
the ordering $E, D, A, B$ we have $\displaystyle \sigma(L) = 0$. As one can observe, we can 
quickly have here a correct partial evaluation that can then be used for checking SLAs violation. 

Regarding the implementation of the PEQF principle, it is important to consider the following 
result. 

\begin{property}
We can find a decomposable HSG $H'$ for which there exist two ordering lists $L_1$ and $L_2$ of 
abstract operations for which $\sigma(L_1) = \sigma(L_2) = 0$. 
\end{property}

This is the case in Figure~\ref{noSolution} with the lists:  
$L_1 = [D, E, A, B]$ and  $L_2 = [D, E, B, A]$. The question in these settings is to choose among the 
two lists.  We adopt for this following principle.

\begin{principle}[Min Domain First (MDF)]
The ordering of abstract operations must consider in priority, the correct partial evaluations with 
the shortest number of concrete operations. A random ordering must be adopted if we have multiple 
options. 
\end{principle}

This principle is inspired from the min-domain heuristic in constraint satisfaction. Let us indeed assume that 
$B$ has less concrete operations than $A$ (i.e. $|C_o(B)| < |C_o(A)|$). Then, the ordering to choose in Figure~\ref{noSolution} 
is $L_2 = [D, E, B, A]$.
The objective is to detect quickly invalid assignments by considering small domains before. One can 
criticize the choice of min-domain because in CSP resolution, the max-degree heuristic  
also performs well for detecting invalid assignments. However, let us observe that the idea of 
max-degree (to start with the most connected variable) is partially included in the first principle (PEQF). 
Indeed, we will see that for finding quickly decomposable graphs, we must consider nested operations in priority. 

To summarize, we modelled our ordering goals within principles, below we consider their implementation for deriving 
backtracking algorithms. 

\subsubsection{Implementation of the PEQF principle}

For implementing the PEQF principle, we propose to use the ordering of abstract operations suggested by 
the reduction order of the evaluation algorithm. For instance, in Figure~\ref{noSolution}, from the 
reduction order $(g_1, g_2); (g_1, B), (A, g_1)$, we deduce the possible ordering $E, D, B, A$. 
What is challenging is to derive systematically such an ordering. For this we will  
introduce two data structures. We will also manipulate the deepness concept, introduced in 
prior work~\cite{GoldmanNgoko}. Below, we recall its definition. 
\begin{definition}[Deepness~\cite{GoldmanNgoko}]
Given an operation graph $G_o$, let us suppose that for a node $u$ (operation or connector),  
we have $n$ paths $Pt_1, \dots Pt_n$ leading to it. In each path $Pt_i$, we have $\alpha_i$ 
split connectors and $\beta_i$ join connectors. The deepness of $u$ is defined 
as $deep(u) = \underset{1 \leq i \leq n}{\max} \{\alpha_i - \beta_i\}$.
\end{definition}
For example in Figure~\ref{dataStructure}, $deep(A) = deep(g_1) = 0$, $deep(B) = deep(E) = 1$. 
The first data structure that we consider for the implementation of the PEQF principle is the  
 {\it nested list for subgraphs' nodes (NeLS)}. For split/join subgraphs, the list gives the 
operation nodes whose deepness are equal to the one of the split node plus $1$. In Figure~\ref{noSolution}, 
the NeLS will have an entry $(g_1, g_2)$ pointing towards a list with the operations $D$ and $E$. 
This is because we have a unique split/join graph that comprises these
operations. For the operations graph of Figure~\ref{dataStructure}, the NeLS has two 
entries. The first entry ($(g_3, g_4)$) points towards a list with $C, D$. The second entry ($(g_1, g_2)$) 
points towards $B, E$. Let us remark that we did not include $C$ here because
$deep(C) = deep(g_1)+2$. Given a NeLS $h_s$, we will use the term $hs(x,y)$ for referring 
to the list that the entry $(x,y)$ points to. For instance, if $h_s$ is the name of the NeLS of 
Figure~\ref{dataStructure}, then  $h_s(g_3, g_4)$ points towards $C$ and $D$. 

\begin{figure}[htbp]
\centering
\fbox{
\includegraphics[width=0.8\linewidth,height=2.5in]{./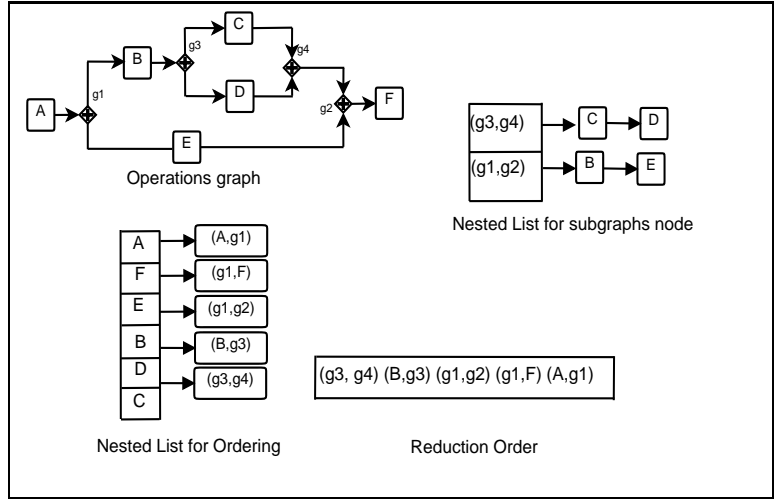}}
\caption{Data structures for the PEQF principle}\label{dataStructure}
\end{figure}

The second data structure is the {\it nested list for operations' ordering (NeLO)}. The main entries of a NeLO consist 
of a list of ordered operations. Each entry points towards a list of subgraphs, defined with their root 
and leafs. The idea is that once a value is assigned to the abstract operation in one entry, the pointed 
subgraphs can be reduced. 

In our algorithms, while the NeLO will be used for storing the ordering of abstract operations, we 
will use the NeLS for the generation of this order. In particular, {\it the NeLS will serve us for 
detecting when to assign a value to an abstract operation that is not included in the reduction order}. 
Figure~\ref{dataStructure}  shows a NeLO related to an operations graph. This NeLO describes the ordering in 
which abstract operations must be considered in backtracking assignments. The first 
operation to which a concrete one must be assigned is $C$. This entry does not point towards any list. 
Therefore, after this assignment, no reduction is done. Then, we must consider $D$. The assignment of 
a concrete operation to $D$ implies that we can reduce the subgraph $(g_3, g_4)$. 
Then, we must continue with $B, E,$ and so on. 

Let us notice that it is easy to redefine the notion of {\it total evaluation precision} for 
computing the value of $\sigma(h_o)$ given the NeLO $h_o$. This is because the entries of a NeLO 
constitute an ordered list of abstract operations. 

With the defined data structures, we will now state how we implement the 
PEQF principle. We view the implementation of the PEQF principle as the computation of a NeLO for which the 
total precision is minimal. The computation of this NeLO is done from an operations graph and a NeLS. 
The process is the following. Firstly, from the operations graph, we generate a reduction order and a NeLS. 
Then, we pick the top element of the reduction order. This element corresponds to a pair $(x,y)$ defining the 
frontiers of a subgraph. We process this element for generating new entries in the NeLO to build. The rules of 
this processing are given in Figure~\ref{Cases}. Once $(x,y)$ is processed, we consider the next element of the 
reduction order. We continue in the same way until processing the last element 
of the reduction order. In Figure~\ref{dataStructureExample}, we illustrate the application of 
this process.

\begin{figure}[!htbp] 
\begin{center}
\begin{tabular}{|p{13cm}|}
\hline 
\small [We have a pair $(x,y)$ in the reduction order and a NeLS denoted $h_s$]\\
\small case \#1 [$x$ and y are operations]: we create two new entries referring to $x$ and $y$ in the NeLO. 
We chain the last entry with a list pointing towards $(x,y)$; this is for stating that this subgraph 
can be reduced after assigning a value to $x$ and $y$. We then remove $x$ and $y$ from all lists of $h_s$.\\
\small case \#2 [$x$ is an operation and $y$ is a split connector]: we create a unique entry $x$ and chain it with a list 
towards $(x,y)$. We remove $x$ from all lists of the the $h_s$. \\
\small case \#3 [$x$ is a split connector and $y$ is an operation]: we create a unique entry $y$ and chain it towards 
$(x,y)$. We remove $y$ from all lists of the the $h_s$.  \\
\small case \#4 [$x$ is a split connector and $y$ is a join connector]: If the list $h_s(x,y)$ is not null, then we 
create an entry in the NeLO for all elements of $h_s(x,y)$. We chain the list of the last element of the NeLO 
with $(x,y)$. Then, we delete $h_s(x,y)$.  \\
\hline
\end{tabular}
\caption{Processing of an element $(x,y)$ for the generation of the operations hash table.}
\label{Cases}
\end{center}
\end{figure}

\begin{figure}[htbp]
\centering
\fbox{
\includegraphics[width=1\linewidth,height=2.6in]{./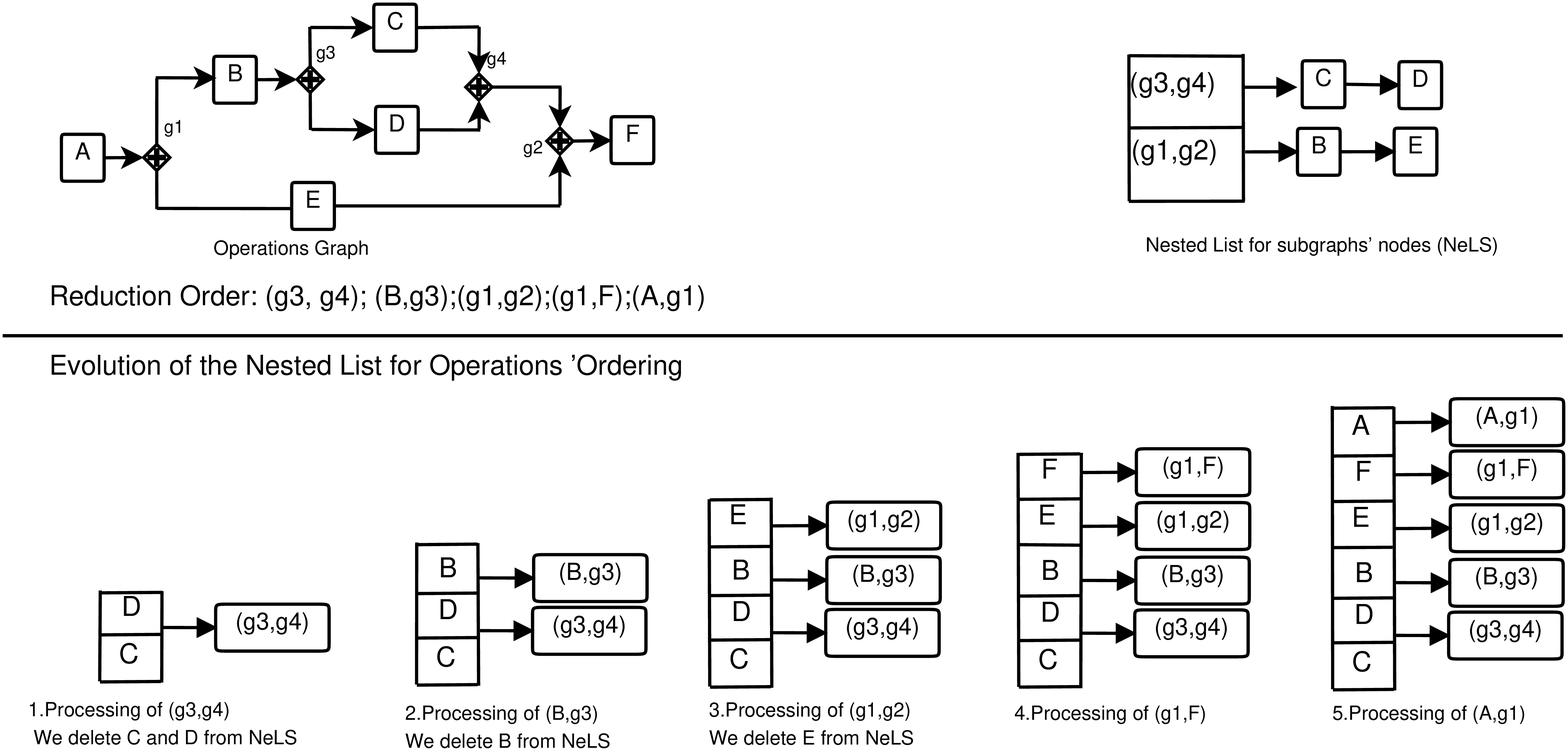}}
\caption{Computation of the ordering}\label{dataStructureExample}
\end{figure}

One can understand the role of the NeLS in this computation as follows. The objective in the NeLO 
is to have a list of abstract operations pointing towards reductions to be done. For obtaining  
 partial evaluations quickly, we generate the NeLO based on the reduction order. However, the representation of 
this order might not include some operations. For instance, in Figure~\ref{dataStructureExample}, $C$ and $D$ 
do not appear in the reduction order. In the NeLO generation, we find the missing operations from the NeLS. 
In particular, when we have a subgraph reduction, we first explore the NeLS (see case \#3) for including 
the subgraph' operations in the NeLO. As one can remark in  Figure~\ref{dataStructureExample}, $C$ and $D$ are 
referred to in the NeLS. 

There are two important observations. The first one is that given an 
element $(x,y)$ of the reduction order, in Figure~\ref{Cases}, we consider all cases regarding 
the type of $x$ and $y$~\footnote{The cases are described in Section~\ref{evaluationAlgorithm}}. 
The second observation is that the described generation process has a polynomial runtime in the number of nodes 
of the operations graph. More precisely, we have the following result.

\begin{lemma}
Giving an operations graph, a reduction order and a NeLS, the NeLO generation can be 
done in $O(n^2)$ where $n$ is the number of nodes of the operations graph. 
\end{lemma}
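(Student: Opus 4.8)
The plan is to bound the total running time as a product of two quantities: the length of the reduction order, which controls how many elements we must process, and the cost of processing a single element, which is dominated by the deletions performed on the NeLS. First I would argue that the reduction order contains $O(n)$ pairs. Each pair $(x,y)$ designates an elementary subgraph that, once reduced, is replaced by a single node; every elementary subgraph contains at least two nodes, so each reduction strictly decreases the number of nodes of the operations graph. Since we start from a graph with $n$ nodes and stop when a single node remains, there can be at most $n-1$ reductions, and the reduction order has length $O(n)$.

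Next I would bound the total size of the NeLS. By its definition, the list $h_s(x,y)$ of a split/join entry contains exactly the operations $u$ lying inside $(x,y)$ with $deep(u) = deep(x)+1$. For any fixed operation $u$ there is a unique innermost split connector enclosing it, namely the one whose deepness equals $deep(u)-1$; consequently $u$ can belong to at most one NeLS list. It follows that the sum of the lengths of all lists stored in the NeLS is at most $|O| \le n$.

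With these two bounds in place, I would analyze the processing of a single element $(x,y)$ according to the four cases of Figure~\ref{Cases}. Two kinds of work occur. The first is the insertion of new entries into the NeLO together with the pointer chaining that attaches a reduction to the last inserted entry; since every abstract operation is inserted into the NeLO exactly once over the whole execution and each chaining is an $O(1)$ pointer operation, the cumulative cost of this part is $O(n)$. The second kind of work is the deletion of operations from the NeLS: in cases \#1--\#3 we remove $x$ and/or $y$ from the lists of $h_s$, and in case \#4 we scan and delete the whole list $h_s(x,y)$. Locating an operation to delete may require traversing the NeLS, which by the previous paragraph has total size $O(n)$; hence a single element may cost $O(n)$ for the deletions. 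Multiplying by the $O(n)$ elements of the reduction order gives $O(n^2)$ for all deletions, which dominates.

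Finally I would combine the contributions --- $O(n)$ for NeLO construction and chaining and $O(n^2)$ for the NeLS deletions --- to conclude that the whole generation runs in $O(n^2)$. The hard part will be the bookkeeping for the deletion cost: one must verify that the $O(n)$ bound on the NeLS size genuinely caps the cost of a single deletion sweep and that case \#4, which removes an entire list at once, does not inflate the total beyond the aggregate $O(n)$ size of the NeLS. A subtler point to confirm is the uniqueness of the enclosing split connector under the given deepness definition, since this is precisely what guarantees both that the NeLS has linear size and that no operation is inserted into the NeLO more than once.
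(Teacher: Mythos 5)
Your proof is correct and takes essentially the same route as the paper's: both bound the number of reduction-order elements by $O(n)$ and the cost of processing a single element by $O(n)$ (dominated by NeLS deletions and NeLO entry creation), multiplying to obtain $O(n^2)$. Your version is merely more self-contained --- you prove the $O(n)$ length of the reduction order and the linear total size of the NeLS (via the unique innermost enclosing split connector) where the paper simply cites prior work and asserts these bounds --- but the structure of the argument is identical.
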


\begin{proof}
Firstly, let us observe that the generation process that we described loops on the number of 
elements of the reduction order. If we have $n$ nodes in the operation graph, then our prior 
work~\cite{GoldmanNgoko} guarantees that the number of elements of the reduction order is in $O(n)$. 
In the treatment of an element of the reduction order, we have the cases listed in Figure~\ref{Cases}. 
These cases are dominated by two instructions: the creation of NeLO entries and the deletion of NeLS elements. 
Since the number of elements of NeLO and NeLS are in $O(n)$, these two instructions are in $O(n)$. 
We have the proof in considering that we loop on elements of the reduction order. 
\end{proof}

Regarding the PEQF principle, the quality of the process proposed for NeLO computations 
can be perceived throughout the following result. 

\begin{lemma}
Let us consider an operations graph with $n$ abstract operations. If the maximal outgoing degree of 
a split connector in the graph is $2$ then, the generated NeLO 
$h_o$ is such that $\displaystyle \sigma(h_o) \leq \frac{n}{2}$. If the graph is a sequence, then 
$\displaystyle \sigma(h_o) = 0$. 
\end{lemma}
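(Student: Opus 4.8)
The plan is to reinterpret $\sigma(h_o)$ as an \emph{amortized pending count} and then to exploit that, because every split has outgoing degree at most $2$, the NeLO ordering never keeps more than one operation ``waiting''. Writing $L$ for the ordered list of operations induced by $h_o$, recall that $\sigma(h_o)=\sum_{i=2}^{|L|-1} pep(L[1..i])$ with $pep(L[1..i])=i-neo(L[1..i])$. I read $neo(L[1..i])$ as the number of the first $i$ operations that already lie inside a reduced (decomposable) subgraph once all reductions enabled by the assignment of $L[1..i]$ have been performed; consequently $pep(L[1..i])$ counts exactly the operations that have been assigned but not yet \emph{absorbed} by a reduction. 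Thus $\sigma(h_o)$ equals the sum, over the prefixes in range, of the number of un-absorbed operations, and the whole statement reduces to controlling how long operations stay un-absorbed.

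First I would pin down the structure of the NeLO ordering. By the construction (the four cases of Figure~\ref{Cases}), $h_o$ is produced by scanning the reduction order innermost-first, and each operation is emitted into $h_o$ exactly once, by the reduction-order element that directly absorbs it: the \emph{removal} of an operation from every list of the NeLS as soon as it is emitted guarantees that a later element (for instance an enclosing split/join treated in case~\#4) contributes only the branch operations that do not already belong to an inner, already-processed subgraph. Hence, if $c_e$ denotes the number of operations emitted while processing element $e$, then $\sum_e c_e = n$. The degree hypothesis now enters: a sequence, a loop, or a split/join of outgoing degree at most $2$ combines at most two sub-parts, so every element emits at most two operations, i.e. $c_e\le 2$ for all $e$.

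From this I obtain the key invariant $pep(L[1..i])\in\{0,1\}$. Before an element $e$ is processed, every previously assigned operation has been absorbed (all earlier subgraphs are reduced), so the pending count is $0$; while $e$ is processed it emits $c_e\le 2$ operations, so the count rises to at most $1$ after the first of them and falls back to $0$ as soon as $e$'s reduction fires, which happens at the latest when the second operation is assigned (the remaining input of $e$ being an already-reduced node). An element with $c_e\le 1$ keeps the count at $0$ since its reduction fires immediately. Therefore a prefix contributes $1$ to $\sigma(h_o)$ only at the ``middle'' position of an element with $c_e=2$, and at most once per such element; writing $t_2$ for their number, $\sigma(h_o)\le t_2$. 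Finally $2t_2\le\sum_e c_e=n$ yields $t_2\le n/2$, hence $\sigma(h_o)\le n/2$.

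The sequence case is the degenerate instance of the same analysis. For a pure sequence the reduction order is left-deep, $(u_1,u_2),(u_{12},u_3),\dots$, so the only element with $c_e=2$ is the very first one; its single pending prefix occurs after assigning one operation, i.e. at position $1$, which lies outside the summation range $[2,|L|-1]$. Every other element has $c_e=1$ and contributes nothing, so $\sigma(h_o)=0$. The delicate part of the argument — and the step I would write out most carefully — is the invariant $pep(L[1..i])\in\{0,1\}$: it hinges on verifying, across all four cases of Figure~\ref{Cases}, that operations are emitted element-by-element in reduction order and that the NeLS bookkeeping makes each element emit only its own direct operations, so that the pending count genuinely returns to $0$ after each reduction rather than accumulating across nested subgraphs.
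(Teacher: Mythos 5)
Your proof is correct and takes essentially the same route as the paper's: both walk the NeLO generation element-by-element along the reduction order, establish that $pep$ is $0$ at element boundaries and at most $1$ in between (which requires exactly the reading of $neo$ as ``operations already absorbed by fired reductions'' that you adopt), and then count the positions where $pep=1$. Your explicit amortized bookkeeping ($c_e\le 2$, $\sum_e c_e=n$, hence $t_2\le n/2$) simply makes rigorous the final step that the paper compresses into ``Generalizing,'' and your sequence-case argument (only the first element emits two operations, and its pending prefix falls outside the summation range) matches the paper's.
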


\begin{proof}

We obtain the result from an analysis of the process of Figure~\ref{Cases}. Firstly, let us assume that 
the graph is a sequence. 
The first element $(x,y)$ of the reduction order in this case refers to two operations. 
According to the process of Figure~\ref{Cases}, this will generate two entries in the NeLO such that the 
last entry points towards the reduction $(x,y)$. Consequently, $pep(h_o[1..2]) = 0$. According to 
the reduction order algorithm~\cite{GoldmanNgoko}, the second element $(x',x)$ of the reduction order will refer to two 
operations $x'$ and $x$ (already in the NeLO). Therefore, we will put an operation in the NeLO entry with a reduction to operate. 
This implies that $pep(h_o[1..2])+pep(h_o[1..3]) =  0$. The third element of the reduction order will have the form $(x",x')$. 
Consequently $pep(h_o[1..2])+pep(h_o[1..3])+ pep(h_o[1..4])=  0$. Generalizing, we will have for the resulting NeLO $h_o$, 
$\sigma(h_o) = 0$. 

Let us now assume that we have split connectors in the operation graph. Then, for the first element $(x,y)$ of the reduction order, we have two cases. Either we have two operations or we have a split connector ($x$) and a join ($y$). 
There are no other possibilities from the reduction algorithm. In the first case, we can easily guarantee from what 
precedes that $pep(h_o[1..2]) = 0$. In the second case, the process of Figure~\ref{Cases} states that we will 
add two operations in the NeLO entry such that the last operation will point towards the reduction $(x,y)$. Consequently, 
 $pep(h_o[1..2]) = 0$. For the processing of the next element $(x', y')$ of the reduction order, we can have multiple cases. Either 
$x'$ or $y'$ is an operation, or, they correspond to connectors. In the former case, we can ensure that $pep(h_o[1..3]) = 0$; 
in the latter case, we can ensure that $pep(h_o[1..3])  \leq 1$ and $pep(h_o[1..4]) \leq 1$. Generalizing, 
$\displaystyle \sigma(h_o) \leq \frac{n}{2}$

\end{proof}

An interesting question is the one of knowing whether or not better total precisions can be expected. 
The answer is no. For sequences graphs, the optimality of the result is guaranteed. For arbitrary structures, we have a lower 
bound. Indeed, let us consider a sequence of subgraphs of two elements. On such graphs, it is impossible to build an ordering 
$h_o$ such that $\sigma(h_o) < \frac{n}{2}-1$. This comes from the fact that for reducing an internal subgraph, 
we must assign a concrete operation to each abstract one. When assigning a value to the first concrete operation, an evaluation is not possible. It is only when making 
an assignment to the second operation that we can evaluate.  Consequently, we can consider that the proposed implementation of the 
PEQF principle is optimal. In the following, we state how to implement the MDF principle.

\subsubsection{Implementation of the MDF principle}

The objective in the MDF principle is to start the assignments with abstract operations whose set of 
concrete ones are small. The applicability of this principle however can be in conflict with the implementation 
of the NeLO. It is the case in Figure~\ref{dataStructureExample} 
if $|Co(F)| > |Co(A)|$. Indeed, the evaluation of $(A, g_1)$ will concern a domain that is smaller than the 
evaluation of $(g_1, F)$. However the latter evaluation is done before in the NeLO. Therefore, how can we 
conciliate the two principles? For this, we consider the following result. 

\begin{lemma}[Free permutation of operations and subgraphs]\label{freePermutation}
Let us consider a decomposable HSG $H$; \\ a) let us assume that in  the operations graph, we have a sequence $(x, y)$  where $x$ and $y$ are either 
operations or decomposable subgraphs. The HSG $H'$ in which we reversed the subgraphs $x$ and 
$y$ (so we have the sequence $(y, x)$) has the same mean response time and energy consumption as $H$; \\ b) let us assume that 
we have a split connector $g$ in $H$. The $H'$ in which we reversed two branches of the split connector 
has the same mean response time and energy consumption as $H$. 
\end{lemma}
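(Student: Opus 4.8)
The plan is to reduce both claims to a single elementary observation — that each aggregation rule in Table~\ref{tabAggRules} is invariant under the relevant permutation — and then to propagate this local invariance to the whole composition by exploiting the compositional nature of the reduction-based evaluation of Section~\ref{evaluationAlgorithm}. First I would observe that in both (a) and (b) the concrete bindings inside the permuted pieces are untouched, so the QoS vectors of the individual subgraphs, $(S(x),E(x))$ and $(S(y),E(y))$ (or those of the swapped branches), are unchanged; only their position in the pattern changes. For part (a), the sequence rule gives $S = S(x)+S(y)$ and $E = E(x)+E(y)$, and since addition is commutative we have $S(x)+S(y)=S(y)+S(x)$ and likewise for $E$. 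Hence the node replacing $(x,y)$ in the reduction carries exactly the same QoS vector as the node replacing $(y,x)$.

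For part (b), I would split on the type of the split connector $g$, which by Figure~\ref{flows} is a fork, an exclusive choice, or an inclusive choice, and note that reversing two branches means swapping both the branch subgraphs and their associated routing probabilities. For a fork, $S=\max\{S(P_1),\dots,S(P_n)\}$ and $E=\sum_{i} E(P_i)$ are symmetric functions of their arguments, so swapping any two branches leaves them unchanged. For an exclusive choice, swapping branches merely reorders the terms $p_i\,T(P_i)$ in the sum $\sum_{i} p_i\,T(P_i)$, which is invariant by commutativity of addition. For an inclusive choice (two branches), swapping the pair $(P_1,p_{or1})$ with $(P_2,p_{or2})$ fixes $p_{or||}$ and relies on the symmetry of $\max$ and of addition in both the $S$ and $E$ formulas. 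In every case the reduced node obtained from $H'$ carries the same QoS vector as the one obtained from $H$; and $H'$ stays decomposable, since each branch is itself a decomposable subgraph and the split/join pair is preserved.

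Next I would lift this local invariance to the global statement. The evaluation algorithm computes the QoS of $H$ by successively reducing elementary subgraphs, each reduction replacing a subgraph by a single node whose QoS vector depends only on the QoS vectors of the reduced parts, not on their internal layout. Choosing a reduction that collapses the permuted subgraph first, I obtain for $H$ and $H'$ identical reduced graphs after this one step, since the two permuted subgraphs share the same QoS vector by the arguments above. All subsequent reductions are then literally the same, so the final single-node QoS — the mean response time and energy consumption of the whole composition — coincides for $H$ and $H'$.

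The main obstacle is precisely this compositional step: justifying that the overall QoS is a function of subgraph QoS vectors alone, independent of where the permuted subgraph sits in the nesting. I would discharge it by induction on the number of reduction steps (equivalently, on the decomposition tree), invoking the correctness of the reduction algorithm established in~\cite{GoldmanNgoko}, which guarantees that the computed QoS is well defined and built compositionally from the operation QoS values and is independent of the particular reduction order. Once that is in place, the per-pattern symmetry verifications are routine, as they follow directly from reading off the commutativity of addition and the symmetry of $\max$ from Table~\ref{tabAggRules}.
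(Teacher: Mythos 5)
Your proposal is correct and follows essentially the same route as the paper's own proof: both rest on the commutativity of addition and the symmetry of $\max$ in the aggregation rules of Table~\ref{tabAggRules}. You are in fact more thorough than the paper, which writes out only the sequence and fork cases and leaves implicit both the exclusive/inclusive-choice verifications (with their routing probabilities) and the compositional lifting step, i.e.\ that a locally QoS-preserving swap preserves the QoS of the whole composition under the reduction-based evaluation.
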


\begin{proof}
The results come from the commutativity of computations in the QoS aggregation rules (see Table~\ref{tabAggRules}). 
The response time of the graph $(x, y)$ will be $S(x)+S(y) = S(y) + S(x)$. The energy consumption will be 
$E(x)+E(y) = E(y) + E(x)$. Since $S(y) + S(x)$ and $E(y) + E(x)$ are the response time and energy consumption of 
$(y,x)$, we have the proof in the case of sequences. In the case of split connectors, we can establish the 
proof in the same way. For instance, given an elementary Fork with the operations $x, y$ its response time 
is $\max\{x, y\} = \max\{y, x\}$.
\end{proof}

The interest in this result is that it suggests a solution for applying the MDF principle 
without violating the PEQF principle. The idea is that the generation of the NeLO is based on a particular 
exploration of the operations node. Before this generation, one can use the reverse instructions of Lemma~\ref{freePermutation} 
for obtaining a topology of the operations graph, adjusted to  partial evaluations for  
small domains in priority.  
As a simple illustrative example, let us reconsider the operations graph of Figure~\ref{dataStructureExample}. 
In the case where $|Co(F)| > |Co(A)|$ we can switch the nodes $A$ and 
$F$. As a result, the reduction order will be $(g_3,g_4); (B, g_3); (g_1,g_2); (g_1, A); (F,g_1)$. In it, 
the evaluation $(g_1, A)$ is now {\it before} $(F,g_1)$.

For the implementation of the MDF principle, we propose to make a topological sorting of the initial operations 
graph that is done based on an extended NeLS (denoted NeLS+). The extensions concern the following points: a) we distinguish between 
branches of the elements of each split/join subgraph; b) we include split/join subgraphs in the list towards which 
a NeLS entry points; c) We assume that the root and leaf nodes of the operations graph are branches of a 
virtual graph $(g_0, g'0)$.

\begin{algorithm}[H]                    
\begin{algorithmic}[1]
\scriptsize
\Function{Main}{}
	\State Generate the reduction order and store it in $ORD$;
	\State Build a NeLS+ $hs^+$;
	\State Topological Sorting of $H$ according to $hs^+$. The result is $H'$;
	\State Build a NeLS $hs$ for $H'$;
	\State Generation of a NeLO $h_o$ from $H'$ and $hs$;
        \State $OptPenalty = +\infty$;  $index = 0$;
	\State Create an uninitialized array $f$ of values for abstract operations;
	\State Call backtrack($f$, $H'$, $h_o$, $OptPenalty$, $index$);
	\State Return the best assignment and $OptPenalty$;
\EndFunction

\Function{backtrack}{$f$, $H'$, $h_o$, $index$}
\If{$index = |f|$}
	\State Compute $E(f)$ and $S(f)$ from the evaluation algorithm with $H$ and $Q$;
	\If{ $S(f) \leq MaxS$ and $E(f) \leq MaxE$ }
		\If{$\lambda.S(f) + (1-\lambda).E(f) < OptPenalty$}
			\State Save $f$ as the best assignment;
			\State $OptPenalty = \lambda.S(f) + (1-\lambda).E(f)$;
		\EndIf
	\EndIf
	\State Return;
\EndIf
\For{ all concrete operations $u \in Co(h_o[index])$}
\State $f[index] =$ $u$;
\If{ $h_o[index]$ points towards some reductions}
\State  Update $E(f)$ and $S(f)$ by making the reductions;
\State Get the reachabilities probabilities $pa$ of the last reduction;
\If{ $pa.S(f) \leq MaxS$ and $pa.E(f) \leq MaxE$ }
\If{ $\lambda.pa.S(f) + (1-\lambda)pa.E(f) < OptPenalty$ }
\State Call backtrack($f$, $H$, $h_o$, $OptPenalty$,  $index+1$);
\EndIf
\EndIf
\Else
\State Call backtrack($f$, $H$, $h_o$, $OptPenalty$,  $index+1$);
\EndIf
\EndFor
\EndFunction

\normalsize
\end{algorithmic}
\caption{\scriptsize SS-b-PM (Backtracking search for service selection with the PEQF and MDF principle). \\ {\bf INPUT:} a HSG $H$ and a QoS matrix $Q$ giving the energy consumption and service response time of each concrete operation;  \\ {\bf OUTPUT:} An assignment of concrete operations to abstract ones }
\label{alg:Backtracking}  
\end{algorithm}

Based on the lemma~\ref{freePermutation}, we will consider two main instructions: the {\it branch sorting} and the 
{\it sequence sorting}. In branch sorting, the objective is to revert the branches of a split/join subgraph such as to 
ensure that when computing the reduction order, in  the first place, we will consider the branch with the smallest domain. 
Such an instruction is meaningful because the computation of the reduction order occurs through a 
depth first search algorithm in which the branches are explored according to a number assigned on them. 
In our current implementation, the branch with the greatest number is explored in priority.

The  sequence sorting considers a sequence of operations and subgraphs and revert them such as to ensure 
that the operations or subgraphs with smallest domain will be considered in priority. In the computation of the 
reduction ordering, let us observe that the last operations of each sequences are considered in priority. Therefore, 
the reverting must ensure that these operations have the smallest domain size. In Figure~\ref{transClosureExample} 
for example, the sequence sorting will reverse in the entry  $(g_0, g'_0)$ the nodes $A$ and $F$. 

The topological sorting is done as follows. Initially, we sort the entries of the NeLS+ based on their deepness. 
The idea is to have the deeper subgraph at the top and the less deep one at the end. 
Then, we compute the total domain size to which correspond 
each entry. The domain size of an entry is the sum of domain size of the lists of elements to which 
it gives access. Let us remark that because of the initial sorting, we can simply evaluate the domain size starting from the 
top entry to the last one. In Figure~\ref{transClosureExample} for instance the domain size of $g_3$ is {\it dom-size($g_3,g_4$)} = $|Co(C)| + |Co(D)|$. The domain size of $(g_1, g_2)$ is $|Co(E)| + |Co(B)| + $ {\it dom-size($g_3,g_4$)}. 
Once we have the domain size of each entry, we make branches sorting first and sequence sorting next. 

\begin{figure}[htbp]
\centering
\fbox{
\includegraphics[width=0.75\linewidth,height=1.4in]{./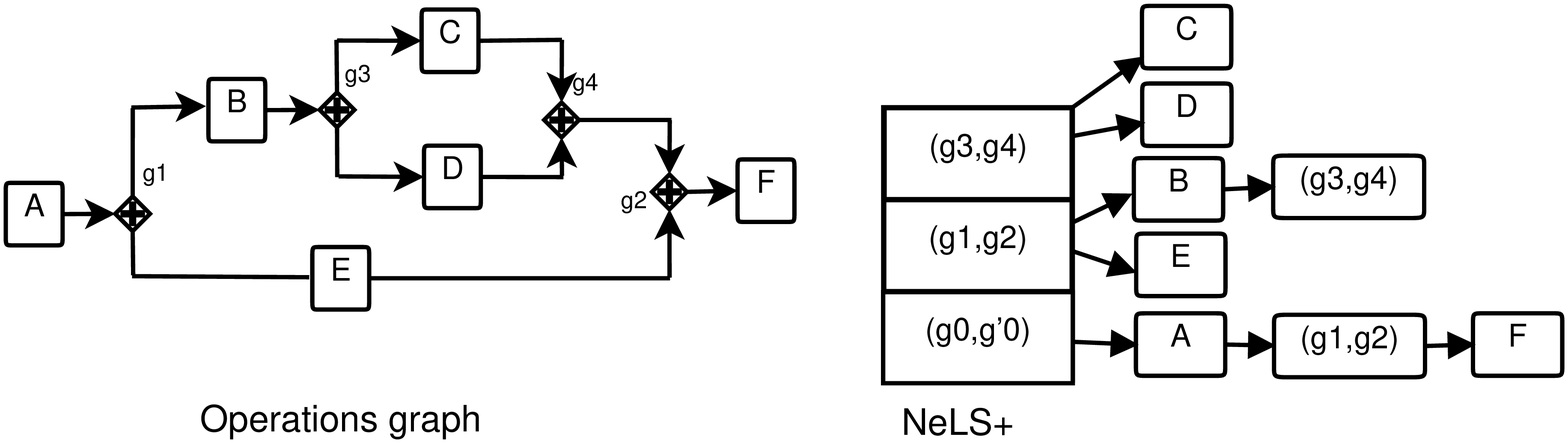}}
\caption{Example of NeLS+}\label{transClosureExample}
\end{figure}

Now that we stated how we implement the principles, in Algorithm~\ref{alg:Backtracking}, we give  
our general backtracking scheme. 
The proposed scheme is based on the following notations:
\begin{itemize}
\item  We assume that $h_o[i]$ refers to the variable of $h_o$  in the $i^{th}$ entry.
\item For defining assignments, we use the array $f$. $f[i]$ will comprise the concrete service 
associated with $h_o[i]$.
\item The over-defined notion $Co(h_o[index])$ refers to the set of concrete services that can be assigned to $h_o[index]$.
\end{itemize}

We will refer to this global algorithm as {\it SS-b-PM}. We will also consider its variant {\it SS-b-P} 
where we do not apply the MDF principle.

\section{Experimental evaluation} \label{ExperimentalEvaluation}

 Throughout the experimental evaluations, our objectives were the following:
\begin{enumerate}
\item to demonstrate that the backtracking based heuristics outperforms the exhaustive search;
\item to demonstrate that the backtracking based heuristics outperforms integer linear programming;
\item to compare the different heuristics.
\end{enumerate}

\subsection{Backtracking versus exhaustive search}

For these experiments, we used $4$ types of operations graphs, each based on 
reference BPMN processes~\cite{Omg,Freund}. The structure of the processes is given 
in Figure~\ref{figProcess}.

From each processes, we created $300$ service selection problems. 
Depending on the SLAs, we ranged instances in three classes of $100$ instances: simple, medium, hard. 
We chose these names because our experiments globally showed  that in increasing the size of the bounds, 
$MaxS$ and $MaxE$, we increase the runtime of all algorithms. Intuitively, the reason is that with 
big bounds, there are more candidate solutions.  
Depending on the number of concrete implementations per abstract operations, 
we ranged our instances in $5$ classes of $60$ instances each. The setting of instances 
is resumed in Table~\ref{instanceSetting}. 
For each instance, we randomly draw the service response time of operation between 
$1,\dots, 1500$. Given a service response time $S$, we deduced the energy consumption from 
the formula $E = P.S$, where $P$ is a power consumption value randomly drawn  between $100,\dots,150$.

\begin{figure}[ht]
\centering
\fbox{
\subfloat[Shipment]{
\includegraphics[scale=0.30]{./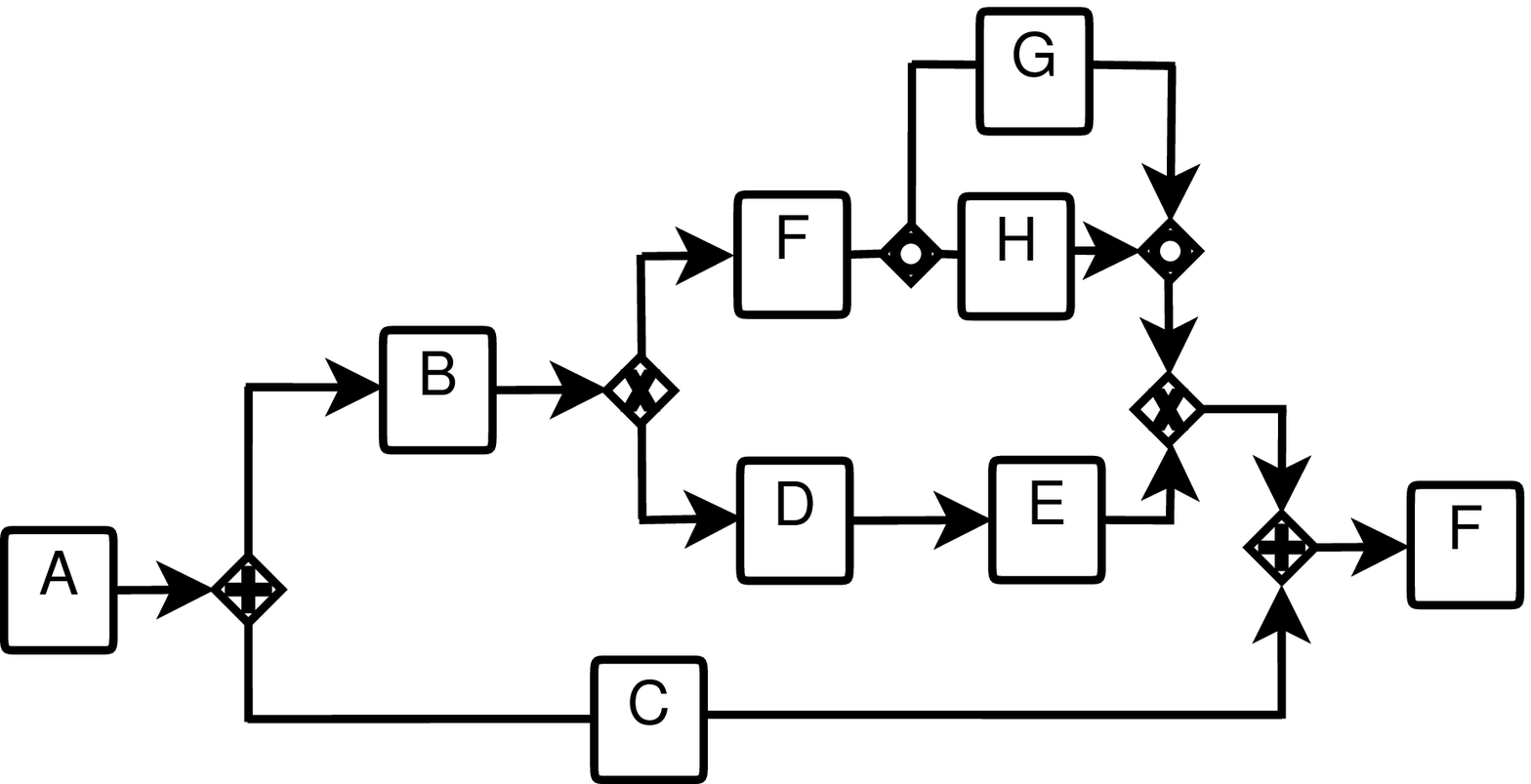} 
}
\subfloat[Procurement]{
\includegraphics[scale=0.30]{./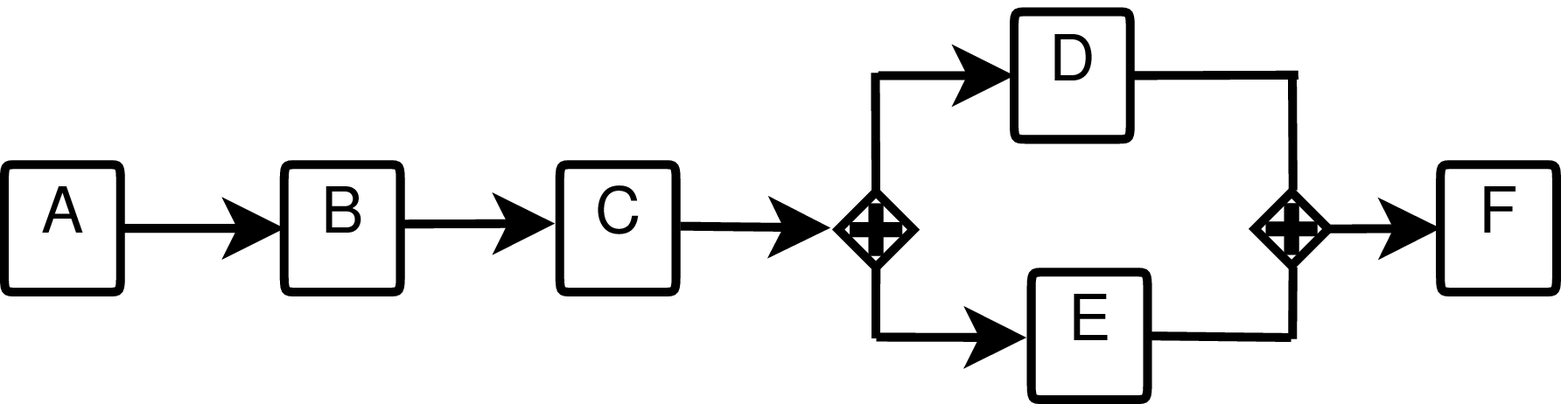}
}
}
\fbox{
\subfloat[Disbursement]{
\includegraphics[scale=0.30]{./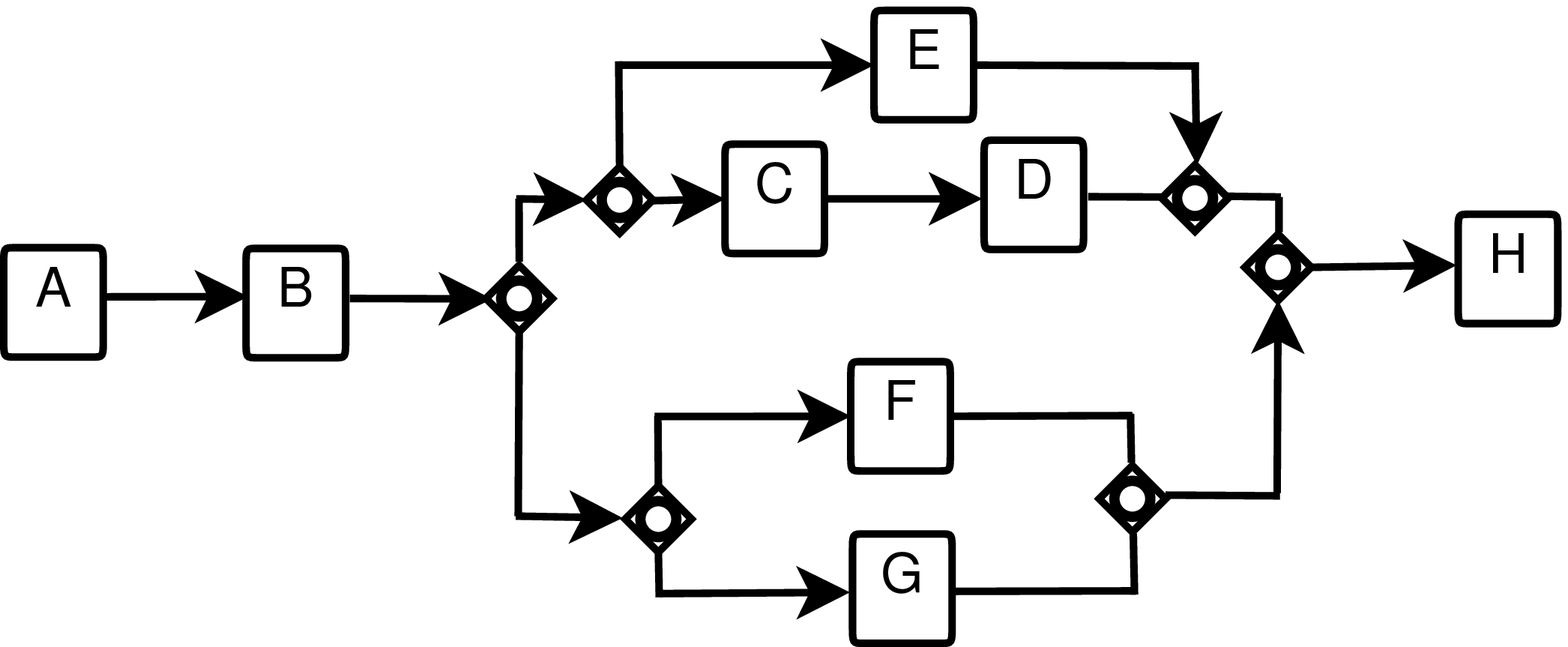}
}
\subfloat[Meal Options]{
\includegraphics[scale=0.30]{./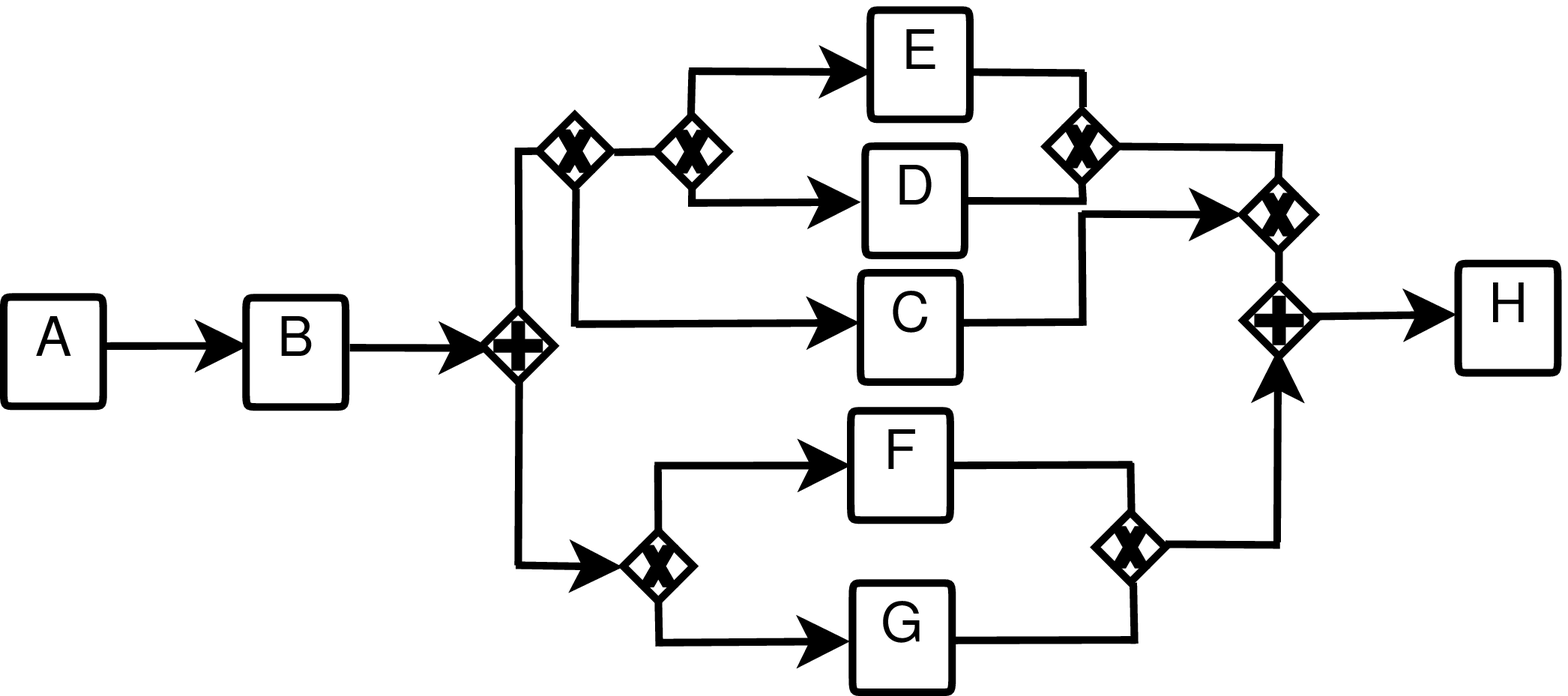}
}
}
\caption{Process examples}
\label{figProcess}
\end{figure}

\begin{table}[htbp]
\centering 
\begin{tabular}{c|c|c}
\hline
{\bf Type of SLAs }  & {\bf (MaxS; MaxE)} & {\bf Domain sizes}  \\\hline
  \textit{Simple}   & {$(3500; 7000)$}   & \multirow{3}{*}{$4$, $6$, $8$, $10$, $12$} \\
  \textit{Medium}   & {$(3000; 5500)$}   \\
  \textit{Hard}   & {$(2700; 4000)$}   \\\hline
\end{tabular}
\caption{Instance settings}
\label{instanceSetting}
\end{table}

\begin{figure}[ht]
\centering
\subfloat[Shipment]{
\includegraphics[width=0.45\linewidth,height=1.6in]{./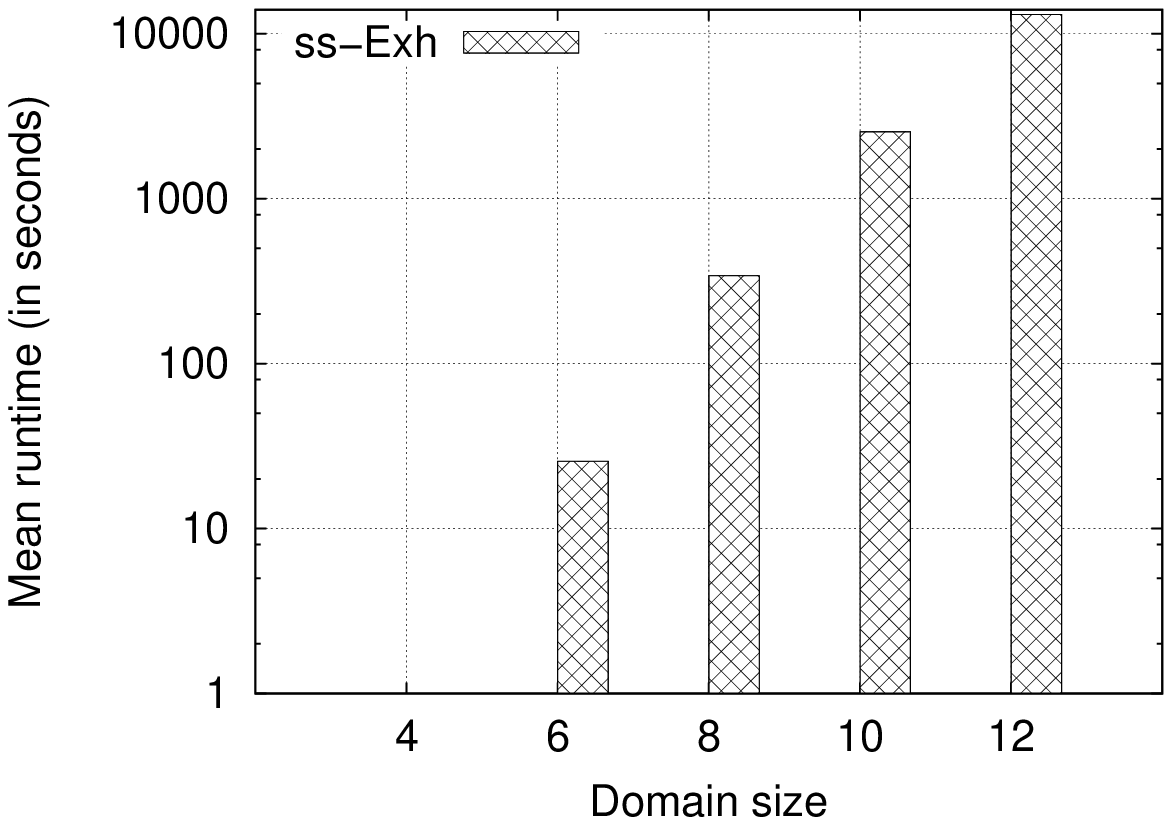}
}
\subfloat[Procurement]{
\includegraphics[width=0.45\linewidth,height=1.6in]{./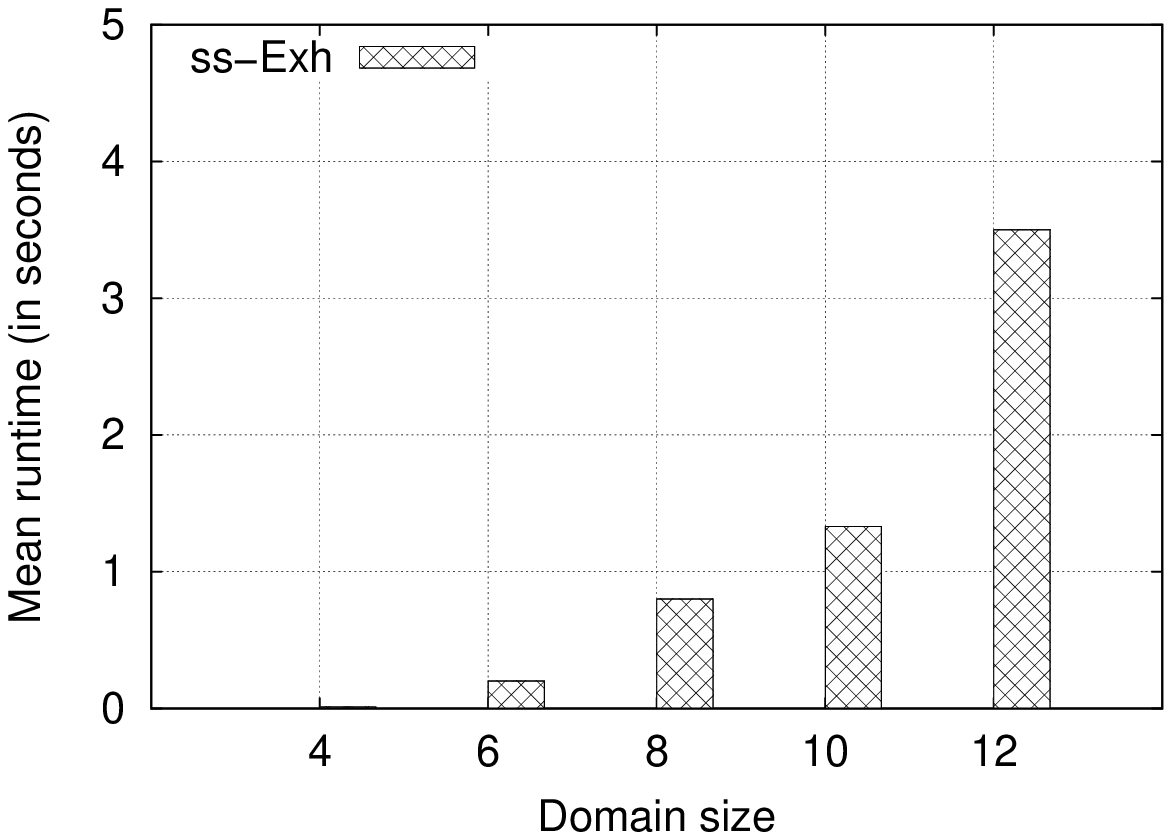}
}

\subfloat[Disbursement]{
\includegraphics[width=0.45\linewidth,height=1.6in]{./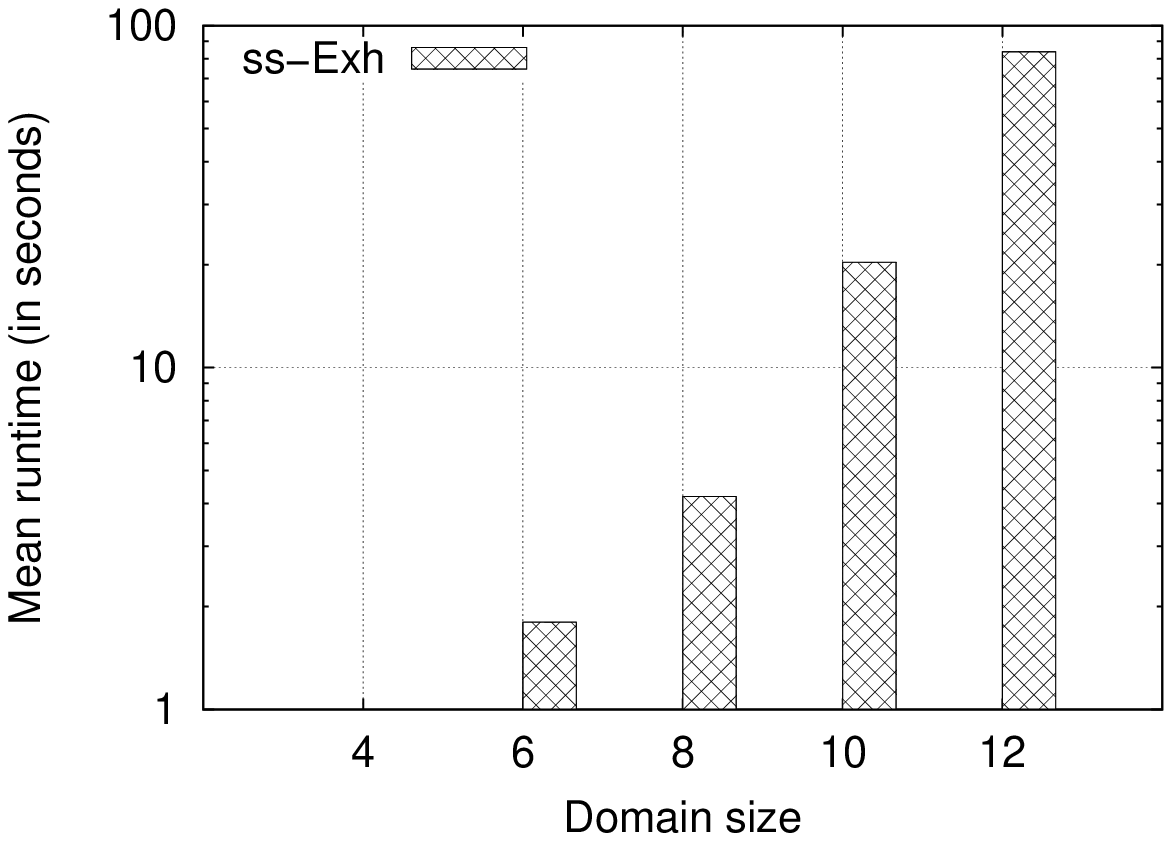}
}
\subfloat[Meal Options]{
\includegraphics[width=0.45\linewidth,height=1.6in]{./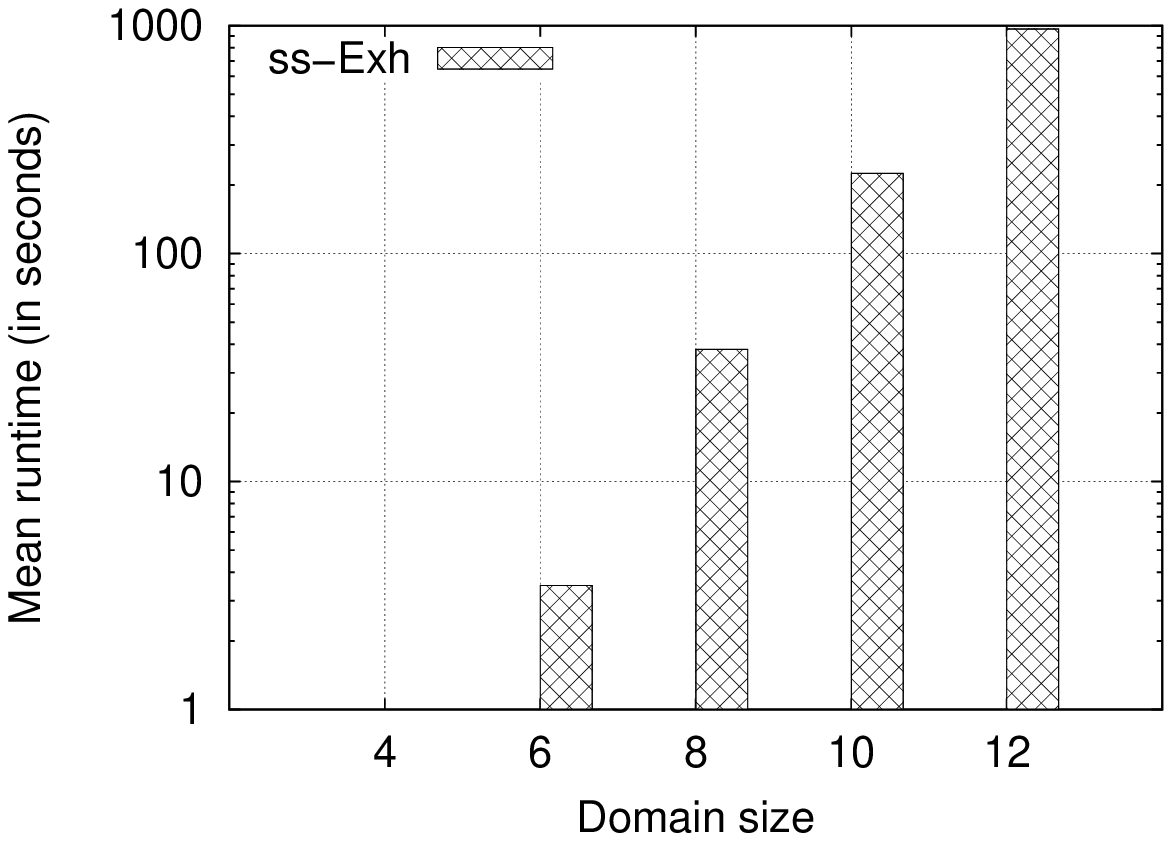}
}

\caption{Runtime of the exhaustive search on {\it simple} instances}
\label{Time-ss-Exh}
\end{figure}

In Figure~\ref{Time-ss-Exh}, we depict the mean runtime obtained from the exhaustive 
search algorithm ({\it ss-Exh}) on the class of {\it simple} instances. As one can notice, these 
runtime exponentially increase with the domain size which we recall, is the number of concrete 
services that can be assigned to an abstract operation. In the same example, the mean runtime {\bf in all experiments }
was lower than $2$ seconds for the backtracking based algorithms. The same trends were also observed in 
considering instances of the {\it medium} and {\it hard} class. This demonstrates that there is 
a large class of instances on which the backtracking based algorithms outperforms the exhaustive search. 

For validating our prior intuition regarding the considerable amount of useless work in exhaustive 
search (see Section~\ref{backtrackingSearch}), we quantified and computed this work. We defined the 
useless work of an algorithm as a complete assignment that does not improve the current 
local solution. In Figure~\ref{UselessExh}, we depict the useless work observed in some instances of the 
disbursement process. As one can notice, this quantity was always greater in exhaustive search. The 
same trend was observed in the other instances. 

\begin{figure}[ht]
\centering
\subfloat[Disbursement Medium]{
\includegraphics[width=0.45\linewidth,height=1.4in]{./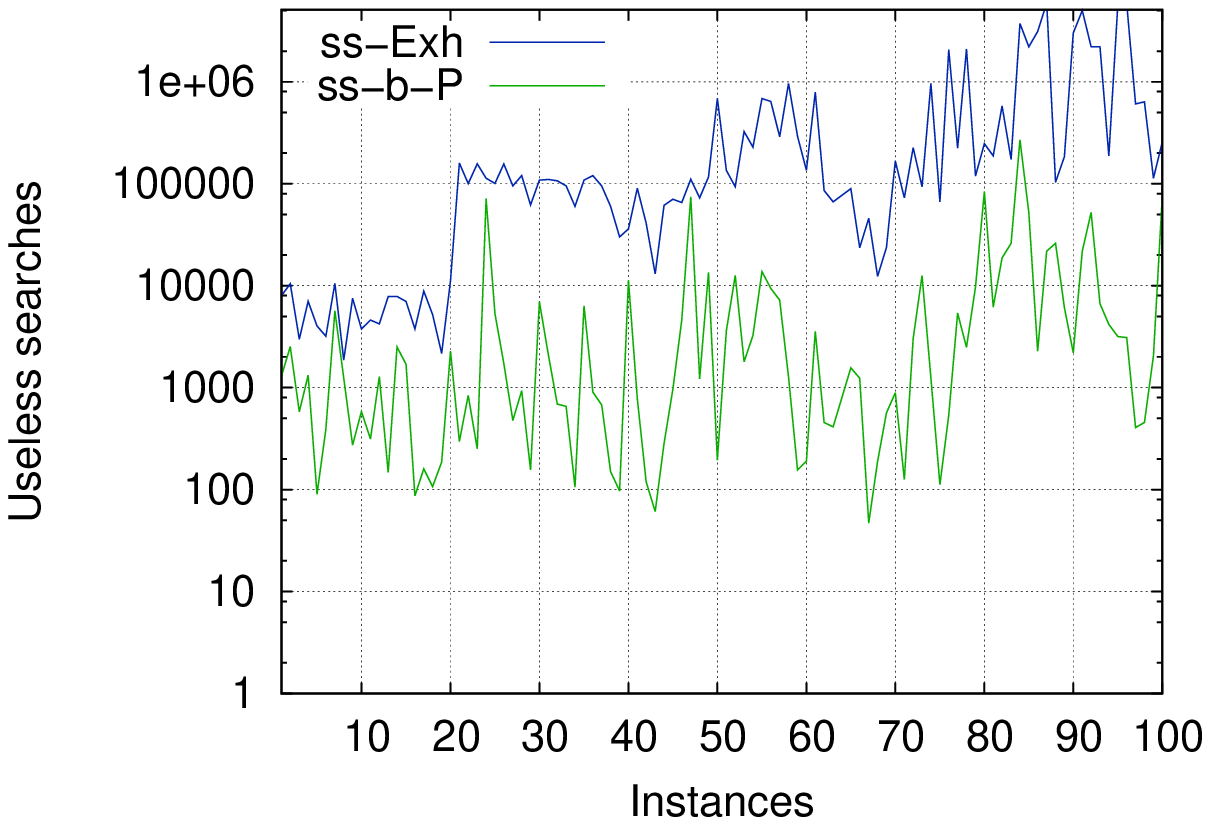}
}
\subfloat[Disbursement Hard]{
\includegraphics[width=0.45\linewidth,height=1.4in]{./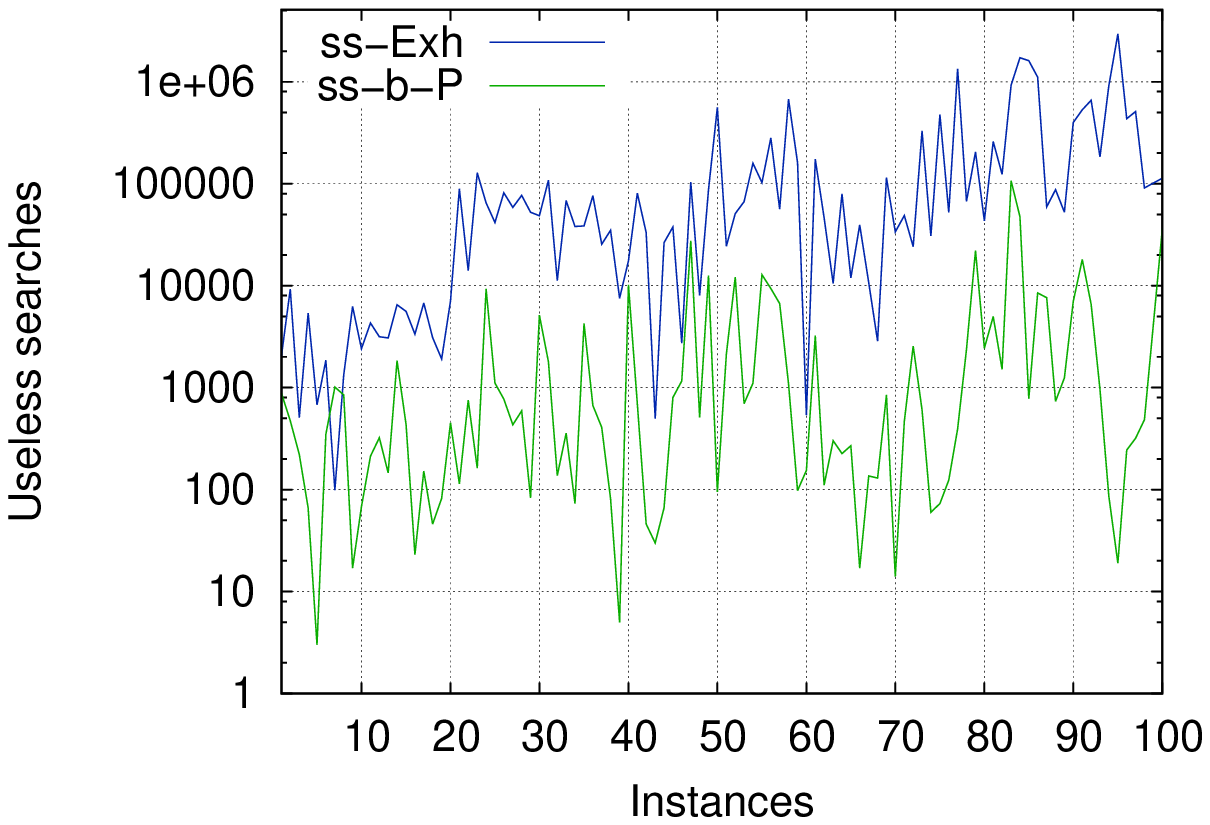}
}

\caption{Useless searches in exhaustive search and backtracking}
\label{UselessExh}
\end{figure}

\subsection{Backtracking versus Integer Programming}

We also compared the backtracking algorithms with integer linear programming. For this 
purpose, we used the integer modelling that we proposed in prior work~\cite{JISA}. This modelling 
was proposed for a more general class of services compositions. However, it supports our 
restricted setting. The integer model was run with the GLPK solver~\cite{GLPK}. In the search of 
the optimal solution, the solver internally computes a lagrangian relaxation for avoiding useless searches. 
Doing so, the run of our integer model is very close to the one of the BBLP algorithm.

In the first experiments, we compared the solvers and our approaches on the previously defined 
instances. We did not however see any significant differences in performances. We increased 
the domain sizes to $140$; but no significant differences appeared. The mean runtime was around $2.5$ seconds 
in either approach. For exhibiting runtime differences, we considered two other processes: the motif 
network and the genelife2 workflow. Both were taken from the Pegasus database~\cite{Pegasus} and come with 
different sizes (small, medium large). For both, we chose the small size variants. An illustrative representation 
of the chosen processes is given in Figure~\ref{Workflow}. 

\begin{figure}[ht]
\centering
\subfloat[Motif]{
\includegraphics[width=0.35\linewidth,height=1.4in]{./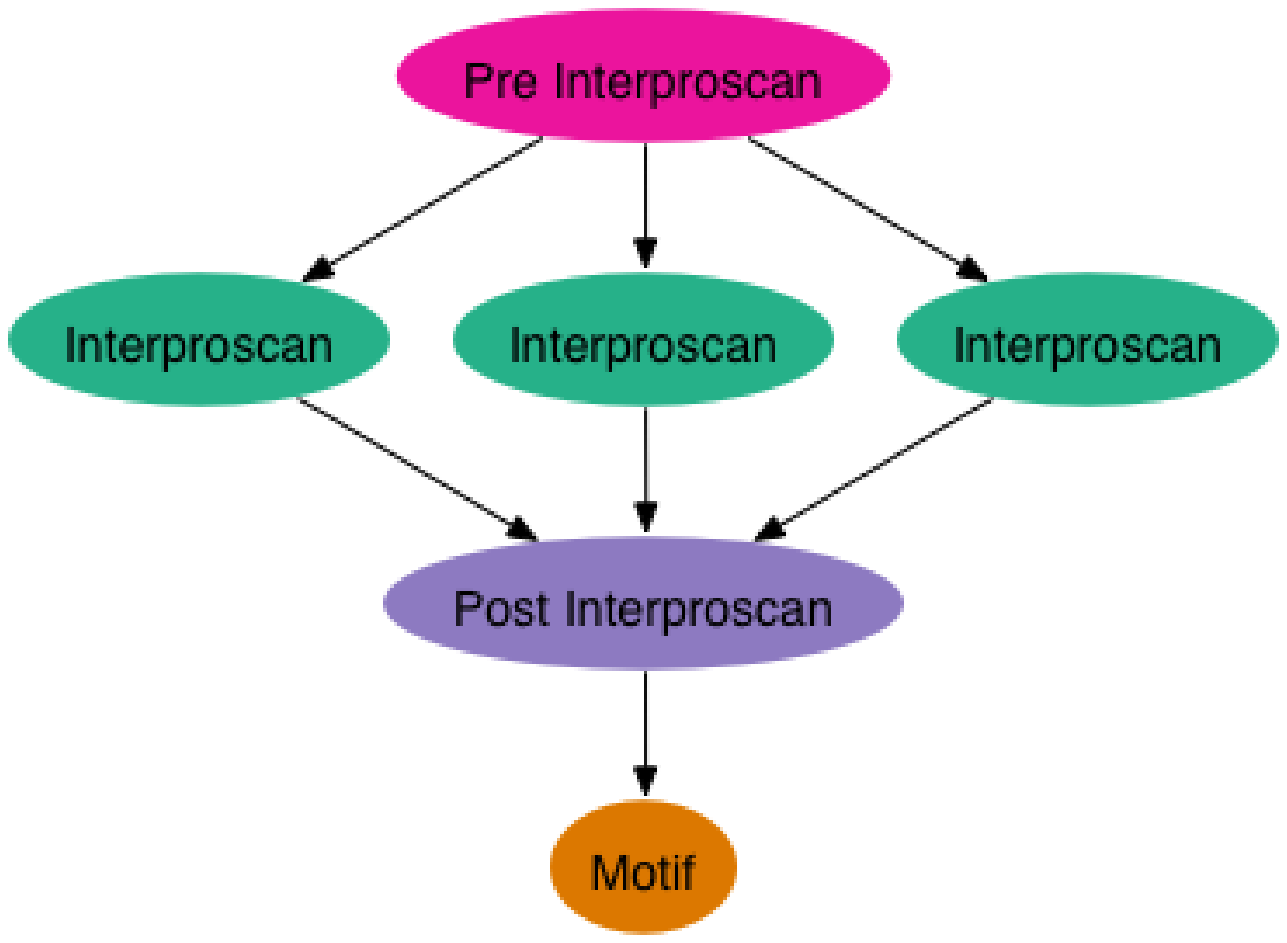}
}
\subfloat[Genelife2]{
\includegraphics[width=0.35\linewidth,height=1.4in]{./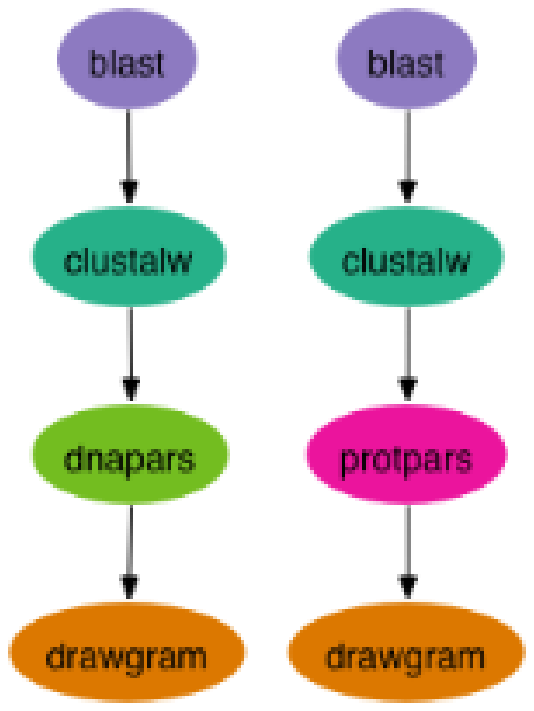}
}

\caption{Pegasus workflows}
\label{Workflow}
\end{figure}

On these two processes, we randomly generated $30$ instances using {\it simple} SLAs constraints and 
 $30$ ones with {\it hard constraints}. The domain size of the instances were taken between $4$, $8$, $16$, $32$, $64$

\begin{figure}[ht]
\centering
\subfloat[Genelife2 Hard]{
\includegraphics[width=0.45\linewidth,height=1.6in]{./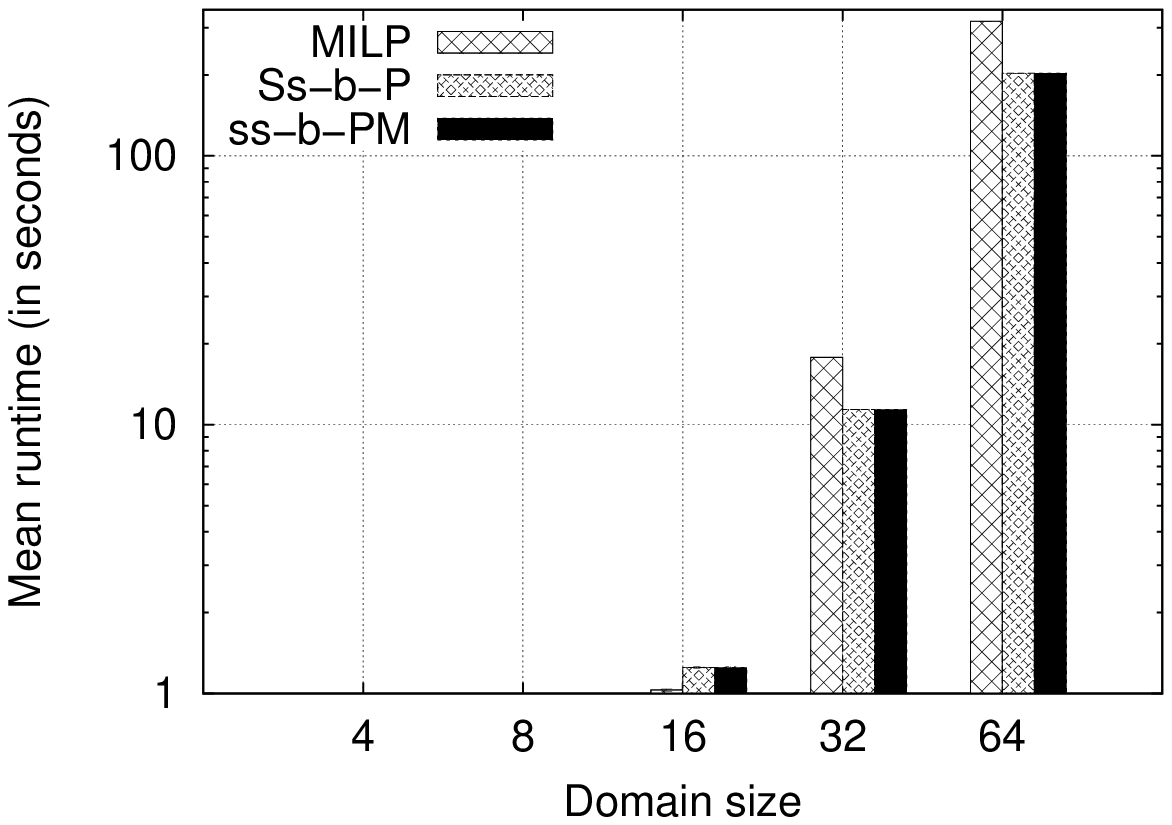}
}
\subfloat[Motif Hard]{
\includegraphics[width=0.45\linewidth,height=1.6in]{./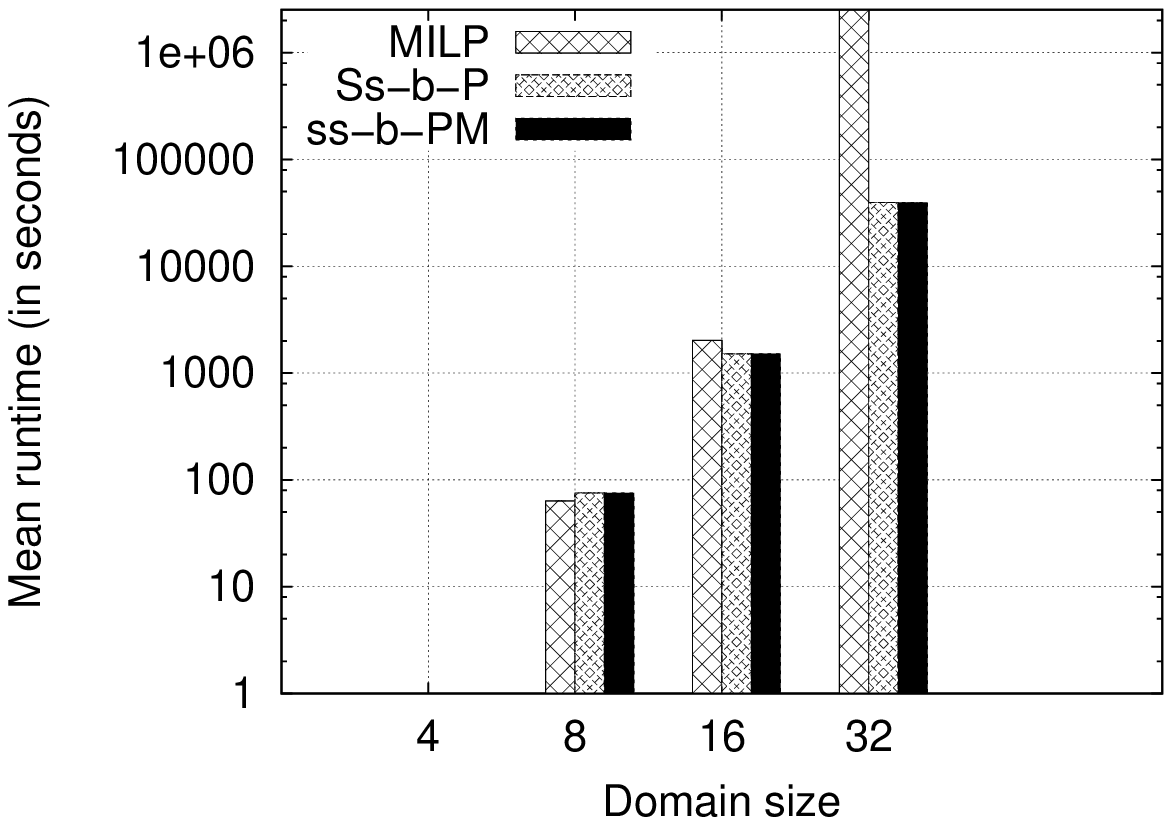}
}

\subfloat[Genelife2 Simple]{
\includegraphics[width=0.45\linewidth,height=1.6in]{./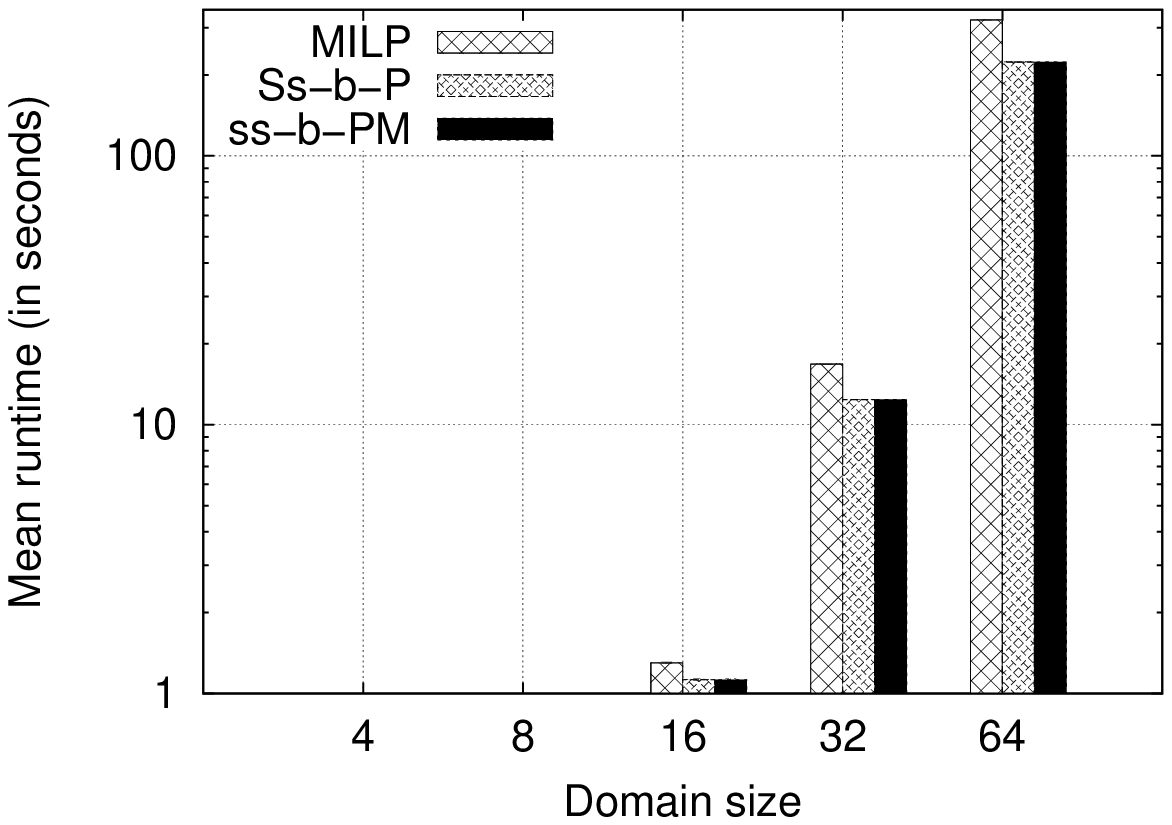}
}
\subfloat[Motif Simple]{
\includegraphics[width=0.45\linewidth,height=1.6in]{./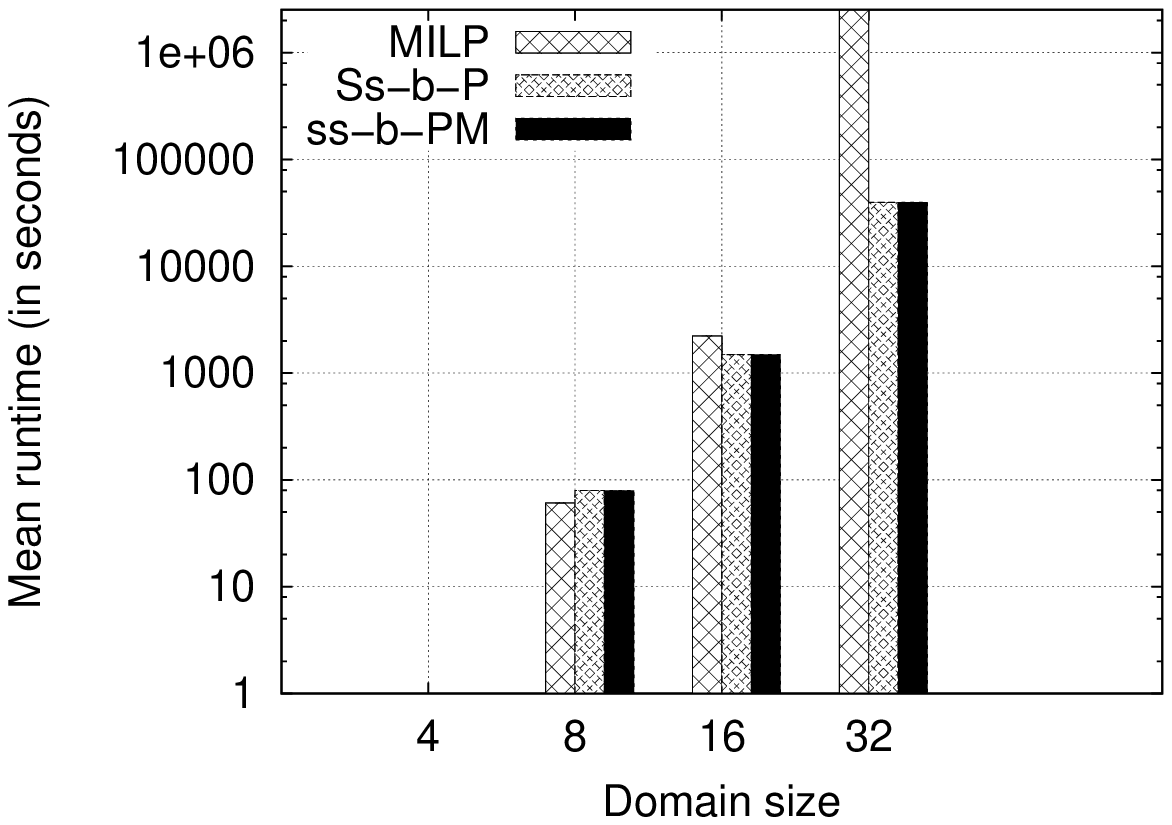}
}

\caption{Runtime of the exhaustive search on {\it simple} instances}
\label{Time-ss-Exh}
\end{figure}

The experimental results are depicted in Figure~\ref{Time-ss-Exh}. 
We noticed that the integer programming (MILP in the Figure) was dominated by the backtracking 
algorithms when we increase the domain size. In particular, for the motif network with a domain 
size equal to $32$, we were not able to get a solution from integer programming after $1$ week. 
We believe that differences between integer programming and backtracking were due to the 
more important number of operations of genelife2 and the motif network. Indeed, we finally had 
 $15$ operations nodes for the motif network and $11$ operations nodes for the genelife2 workflow. 

In addition to the superiority of backtracking, these experiments also revealed that if 
we can expect real-time results with backtracking in the case where we have less than $9$ 
abstract operations, with more than $10$ abstract operations, the algorithm becomes time 
consuming. But it is important to remark that given few number of abstract operations, we 
can have quick solutions even if the number of concrete services is important (greater than $1400$ 
for instance). 

\subsection{What is the best backtracking algorithm?}

The objective here was to determine what is the faster backtracking algorithm. 
On this point, our experiments did not reveal a clear trend. In Figure~\ref{Time-ss-Exh} 
for instance, one can notice that the algorithms are quite similar even if there are some runtime 
differences. We did additional experiments where instead of a fixed number of concrete services 
per abstract operations, we set a random number chosen each time between $3$ and a maximal 
domain size. In Figure~\ref{Time-ss-b}, we depict the results obtained on {\it hard }
instances. As one can notice, they do not define a particular trend. We believe that the small variations 
that we observed state that depending on the distribution of energy consumption and service response time, 
a backtracking algorithm can detect earlier some partial assignments that must not be complete.

\begin{figure}[ht]
\centering
\subfloat[Shipment Hard]{
\includegraphics[width=0.45\linewidth,height=1.6in]{./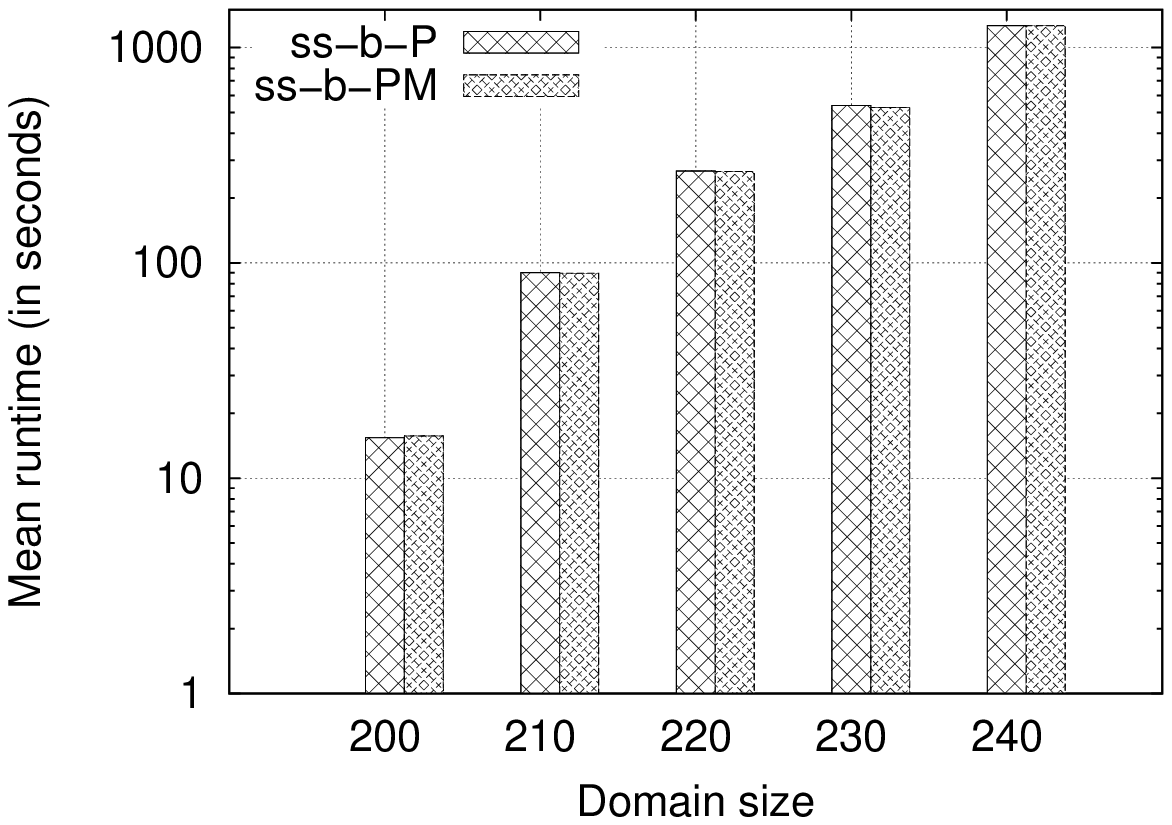}
}
\subfloat[Disbursement Hard]{
\includegraphics[width=0.45\linewidth,height=1.6in]{./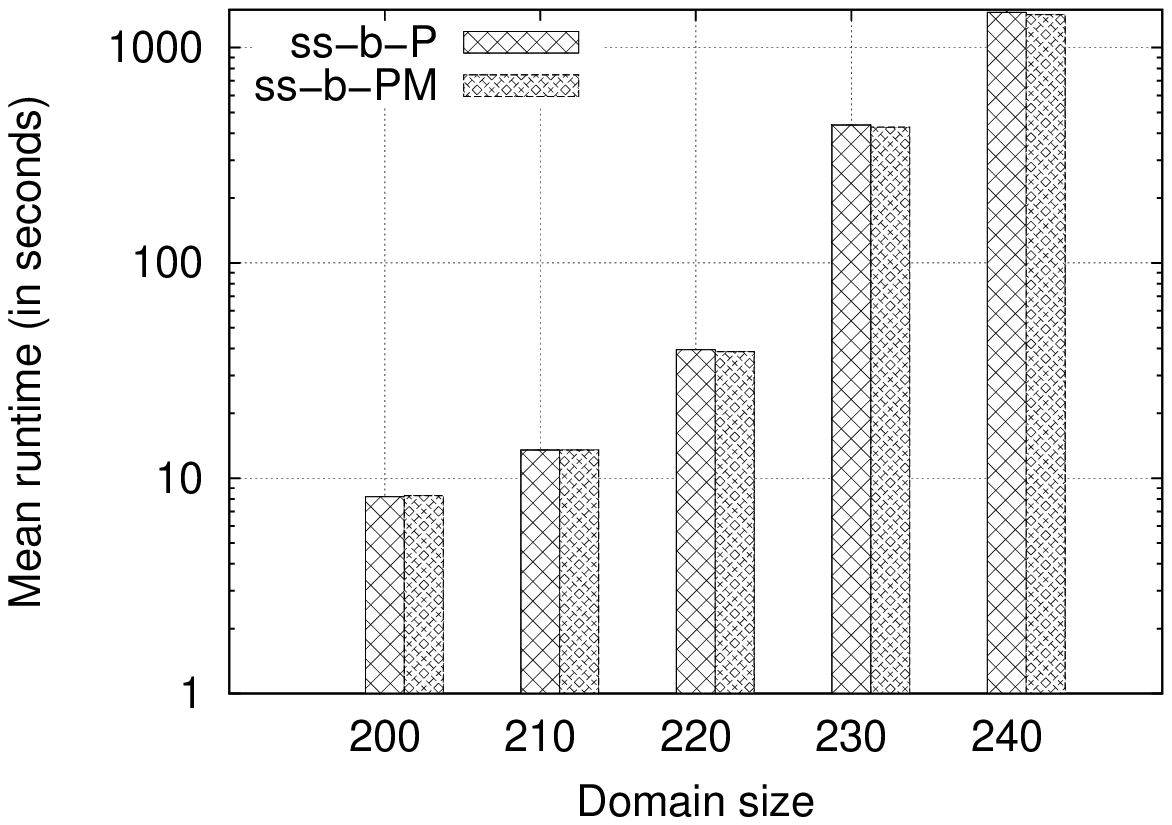}
}

\subfloat[Procurement Hard]{
\includegraphics[width=0.45\linewidth,height=1.6in]{./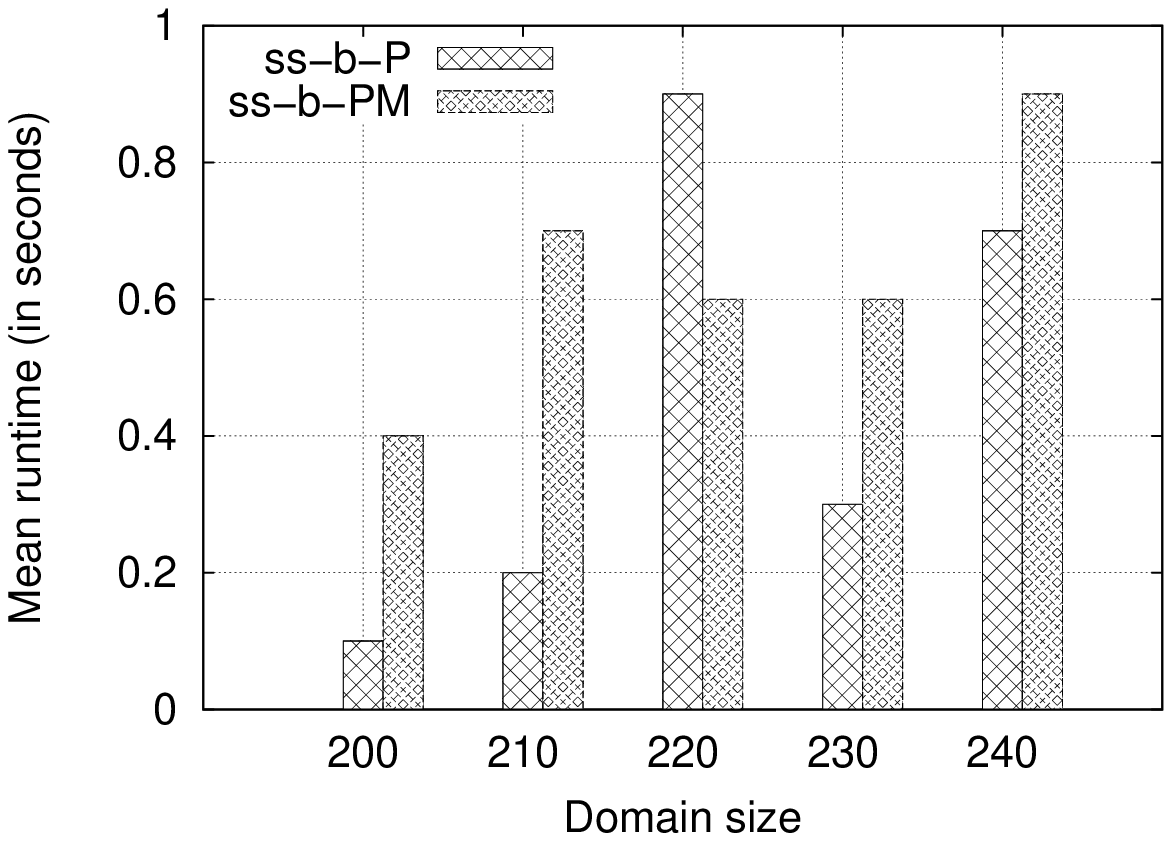}
}
\subfloat[Meal Hard]{
\includegraphics[width=0.45\linewidth,height=1.6in]{./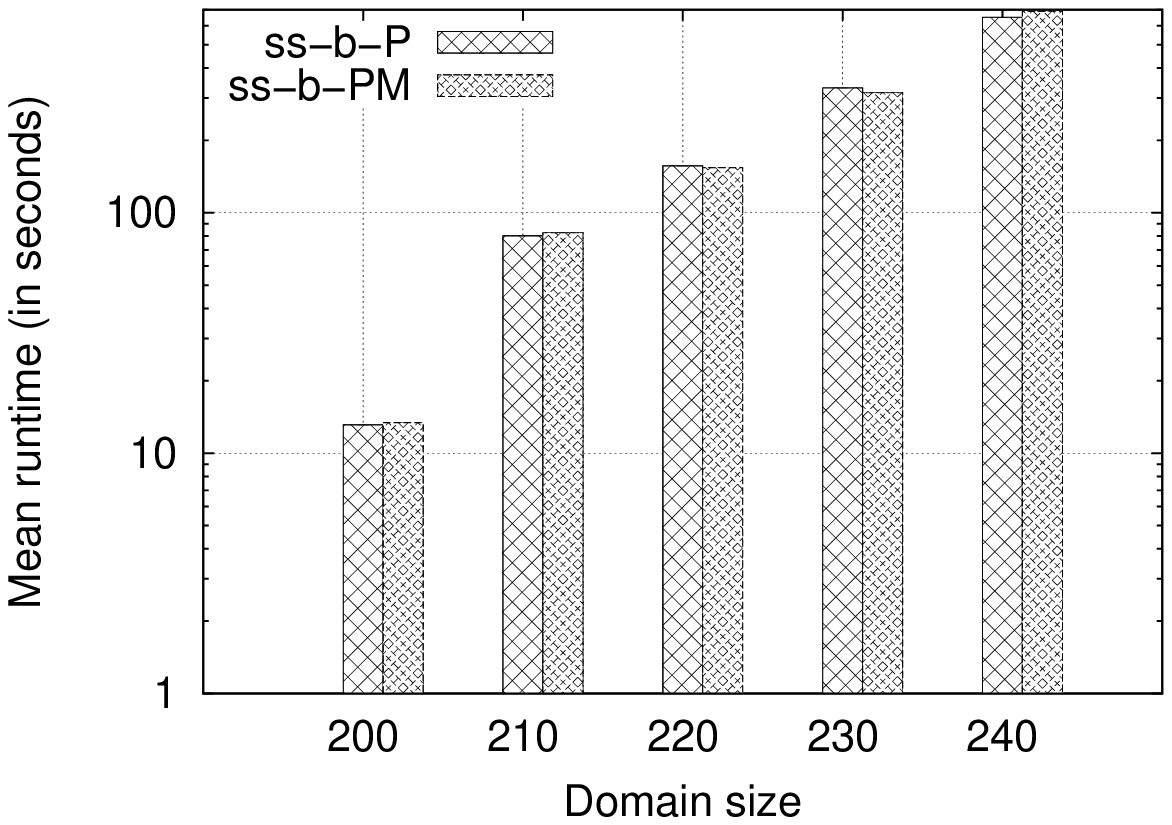}
}
\caption{Runtime of the backtracking algorithms on {\it hard} instances}
\label{Time-ss-b}
\end{figure}

Our experiments globally showed that the backtracking algorithms can outperform classical 
exact solutions for the service selection problem. But, we did not see any clear 
difference between the backtracking algorithms.

\section{Discussion} \label{Discussion}

In this paper, we proposed a novel approach for solving the service selection problem. 
The main idea here is to view this problem as a CSP. We now introduce a short discussion 
about the potential of this viewpoint.

As already mentioned, our proposal can also be adapted for the resolution of the 
feasibility problem. In an algorithmic viewpoint, it suffices in {\it SS-b-PM} to add 
a control that will stop the execution when a first solution is found. 

The experiments demonstrated that 
backtracking can be envisioned for real time service compositions on small processes. 
In the case of large processes, we propose to modify theses algorithms for obtaining quick solutions  
that are not necessarily optimal. A classical and simple idea for this is 
to include a cutoff time. Cutoff time means that in the algorithm, we include a control for returning 
the best computed solution when the cutoff time is reached. 

The CSP mapping that we propose can also inspire other algorithms for the service 
selection problem. While our proposal only focuses on the backtracking technique, it might
be interesting to consider other CSP resolution technique like the forward checking or the 
backjumping~\cite{Baker95intelligentbacktracking}. Moreover, we can also revisit our 
backtracking proposal to adapt it to other ordering techniques used in constraint 
satisfaction. We have for instance the max-domain, min-degree, max-domain/degree and 
the set of dynamic orderings. Finally there also exist many parallel algorithms for 
constraint satisfaction. The techniques employed for achieving parallelization can certainly 
be reused in the service selection problem.

\section{Conclusion} \label{Conclusion}

This paper proposes a novel approach for the resolution of the service selection problem. 
The main idea is to consider this problem as a particular case of the constraint satisfaction 
problem. We gave the theoretical mapping that supports this idea. We then derive from it two backtracking 
algorithms for service selection. We showed in the experiments that the proposed algorithms can 
outperform classical exact solutions for the service selection problem. However, for using them in 
real-time context, one must consider small services compositions. This result was expected due to the 
NP-hardness of the problem; however the techniques that we propose can drastically reduce the search 
space explored for finding the optimal services composition. 
 For continuing this work, we envision 
two main directions. In the first one, we are interesting in developing a parallel algorithm for the 
service selection problem based on what is done in CSP parallelization. Our second direction consists of 
applying our algorithms for service selection on real services compositions. 

\section*{Acknowledgments}

The experiments conducted in this work were conducted on the nodes B500 of the university of Paris 13 
Magi Cluster and available at http://www.univ-paris13.fr/calcul/wiki/

\bibliographystyle{hplain}
\bibliography{ServiceSelectionCSP-Report}

\end{document}